\documentclass[a4paper,11pt]{article}
\usepackage{cite}
\pdfoutput=1
\usepackage{amsmath}
\usepackage{graphicx}
\usepackage{amsfonts}
\usepackage{amssymb}
\usepackage{amsmath}
\usepackage{amsthm}
\usepackage{amscd}
\usepackage{dsfont}
\usepackage{mathtools}
\usepackage{bm}
\usepackage[svgnames]{xcolor}
\usepackage[colorlinks=true,urlcolor=purple,linkcolor=violet,citecolor=magenta,bookmarks=false]{hyperref}
\usepackage{stmaryrd}
\usepackage[nomessages]{fp}
\usepackage{color}
\usepackage{hyperref}
\hypersetup{colorlinks,linkcolor={blue},citecolor={blue},urlcolor={blue}}
\usepackage{empheq}

\definecolor{orange-lpens}{RGB}{230,62,37}
\definecolor{bleu-lpens}{RGB}{41,50,117}

\numberwithin{equation}{section}
\newcommand{\bc}{\begin{center}}
\newcommand{\ec}{\end{center}}
\newcommand{\beq}{\begin{equation}}
\newcommand{\eeq}{\end{equation}}
\newcommand{\beqa}{\begin{eqnarray}}
\newcommand{\eeqa}{\end{eqnarray}}
\newcommand{\beqs}{\begin{eqnarray*}}
\newcommand{\eeqs}{\end{eqnarray*}}

\newcommand{\bi}{\begin{itemize}}
\newcommand{\ei}{\end{itemize}}

\newcommand{\bra}{\langle}
\newcommand{\ket}{\rangle}

\newcommand{\dd}{\mathrm{d}}
\def\ket#1{|#1\rangle}
\def\bra#1{\langle#1|}
\def\vev#1{\langle #1\rangle}
\def\cadremath#1{\vbox{\hrule\hbox{\vrule\kern8pt\vbox{\kern8pt
			\hbox{ {$\displaystyle #1 $ } }\kern8pt} 
			\kern8pt\vrule}\hrule}}
\newtheorem{theorem}{Theorem}[section]

\newtheorem{proposition}[theorem]{Proposition}

\newtheorem{claim}[theorem]{Claim}

\theoremstyle{definition}
\newtheorem{remark}[theorem]{Remark}
\theoremstyle{definition}

\theoremstyle{definition}
\newtheorem{definition}[theorem]{Definition}
\theoremstyle{definition}

\newcommand{\EE}{ \mathbb{E} }

\begin{document}

\begin{center}
{\LARGE \bf The Quantum Symmetric Simple Exclusion Process \\ ~ \\ in the Continuum and Free Processes}
\end{center}

\vspace{0.3cm}

\begin{center}
Denis BERNARD \footnote{Email: denis.bernard@ens.fr}
\end{center}

\noindent
{\it Laboratoire de Physique de l'Ecole Normale Sup\'erieure, CNRS, ENS \& Universit\'e PSL, Sorbonne Universit\'e, Universit\'e Paris Cit\'e, 75005 Paris, France.}
\vspace{0.3cm}

\centerline{\today}

\vspace{0.5cm}

\begin{abstract}
The quantum symmetric simple exclusion process (QSSEP) is a recent extension of the symmetric simple exclusion process, designed to model quantum coherent fluctuating effects in noisy diffusive systems. It models stochastic nearest-neighbor fermionic hopping on a lattice, possibly driven out-of-equilibrium by boundary processes. We present a direct formulation in the continuum, and establish how this formulation captures the scaling limit of the discrete version. In the continuum, QSSEP emerges as a non-commutative process, driven by free increments, conditioned on the algebra of functions on the ambiant space to encode spatial correlations. We actually develop a more general framework dealing with conditioned orbits with free increments which may find applications beyond the present context. We view this construction as a preliminary step toward formulating a quantum extension of the macroscopic fluctuation theory.
\end{abstract}
\vspace{0.3cm}

{\setlength{\parskip}{0mm} \setlength{\baselineskip}{0.175in} \tableofcontents}

\medskip
\centerline{-----------------------------}

\section{Introduction and perspective}
\label{sec:intro}

Non-equilibrium phenomena, whether classical or quantum, are ubiquitous in Nature, but their understanding remains far less profound than those at equilibrium. Over the past decade, significant conceptual progress has been made for classical non-equilibrium systems, driven by exact analysis of simple model systems, such as the symmetric simple exclusion process (SSEP) or its asymmetric counterpart (ASEP). See \cite{Derrida2007Non-equilibrium,Mallick2015The} for a review. These progresses culminated in the formulation of the macroscopic fluctuation theory (MFT), an effective theory describing transports and their fluctuations in diffusive classical systems \cite{bertini2014macroscopic}.

However, the questions whether --and how-- the macroscopic fluctuation theory might be extended to the quantum realm remains open \cite{bernard2021can}. A quantum analogue of MFT should not only describe diffusive transport and its fluctuations, but also capture quantum coherent phenomena and their fluctuations, including quantum interferences, correlations, and entanglement dynamics in out-of-equilibrium diffusive systems. Recent progress has been achieved through two complementary approaches. The first approach, based on the analysis of random quantum circuits, has led to an emerging membrane picture for entanglement production in quantum chaotic systems \cite{Zhou2019emergent,Mezei2018membrane,mezei2025entanglementmembranebrowniansyk,swann2026continuummechanicsentanglementnoisy}.  For a review, see \cite{Fisher_2023}. A parallel approach involves analyzing quantum exclusion processes, which extend classical exclusion processes into the quantum domain. See \cite{barraquand2025introduction} for a review. A paradigmatic example is the quantum symmetric simple exclusion process (QSSEP) \cite{bernard2019open}, a stochastic evolution of fermions hopping along a one-dimensional chain. While its averaged dynamics reproduces the classical SSEP, studying QSSEP has revealed remarkable features: (i) transport fluctuations in such noisy quantum systems are typically classical, with sub-leading fluctuations of quantum origin \cite{bernard2025large,albert2026universalclassical}, and (ii) off-diagonal quantum correlations persist beyond decoherence, revealing a rich structure deeply connected to free probability theory \cite{hruza2022coherent}. 

Yet, these studies rely on discrete models, followed by continuous scaling limits. Our goal is here to formulate QSSEP directly in the continuum, bypassing the discrete-to-continuum transition.

The QSSEP dynamics is fully characterized by the stochastic evolution of the $N\times N$ matrix of two-point functions, $(G_s)_{ij}:=\mathrm{Tr}(\rho_s c_i^\dag c_j)$, where $c_i^\dag, c_i$ are the fermionic annihilation-creation operators and $\rho_s$ the state (density matrix) of the system at time $s$, with $i,j=1,\cdots,N$ labeling the sites of the chain. 
Specific aspects of QSSEP ensure that its matrix of two-point functions satisfies a stochastic differential equation (SDE) --this is not true for generic quantum exclusion processes but is a special feature of QSSEP--, of the following form,
\beq \label{eq:qssep-first}
G_{s+\dd s} = e^{i\dd h_s} G_s e^{-i\dd h_s} + \mathrm{boundary\ terms},
\eeq
where the stochastic Hamiltonian increments $\dd h_s$ are some instances of structured matrix valued Brownian increments. 
The QSSEP exists in three variants, each associated with distinct algebraic structures: (i) the "periodic" case, defined on a circle, whose Hamiltonian increments is naturally associated to the loop algebra of $su(N)$ and without boundary terms, (ii) the "closed" case, defined on an interval, with Hamiltonian increments associated to $su(N)$ and without boundary terms, and finally, (iii) the "open" case, also defined on an interval, with Hamiltonian increments identical to those of the closed case but with supplementary boundary processes. In the large $N$ limit, these boundary terms fix the particle densities (denoted $n_a,n_b$) at the two boundaries, and drive the system out-of-equilibrium for $n_a\not= n_b$. See Appendix \ref{app:discrete-qssep} for a more precise definition and notation.

Analyzing the scaling limit ($N\to\infty$, with a diffusive rescaling of time and space, $t=s/N^2\in \mathbb{R}_+$ and $x=i/N\in[0,1]$) reveals that the relevant quantities are the cyclic moments of $G$ of the form $\EE[(G_s)_{i_1i_2} (G_s)_{i_2i_3} \cdots (G_s)_{i_{p+1}i_1}]$, $p\geq 0$, in which the successive indices form a cycle. They scale as $O(N^{-p})$ and depend on the index positions $x_k=i_k/N$. To capture this dependence, we introduce test diagonal matrices $\hat \Delta_k$'s and consider dressed moments,
\beq \label{eq:dressed-moments}
\EE[ (G_s\hat \Delta_1 G_s\cdots \hat\Delta_p G_s)_{ii}] \sim O(1),
\eeq
Expanding these moments linearly in $\hat \Delta_k$'s yields the cyclic moments back.
They can naturally be viewed as moments of random matrices conditioned on the sub-algebra of diagonal matrices. 

From this discussion, we identify two essential properties that QSSEP in the continuum should satisfy:
\begin{itemize}
\item[(a)] The limiting process should be defined in terms of conditioned measures. In the large size limit, the algebra of diagonal matrices becomes the algebra $L_\infty[0,1]$ of bounded functions on $[0,1]$, since to any function $\Delta\in L_\infty[0,1]$ we may associate the diagonal $N\times N$ matrix $\hat \Delta$ with entries $\hat \Delta_{ii}=\Delta(i/N)$. Thus, the scaling limit of QSSEP should be defined in terms of processes on some filtered algebra $(\mathcal{A}_t)_{t\in \mathbb{R}_+}$, conditioned on $L_\infty[0,1]$.
\item[(b)] The limiting process should be driven by free independent increments. In the large size limit, the Hamiltonian increments $\dd h_s$ become independent, identically distributed, large Gaussian matrices which, according to the general lore of random matrix theory, should be mapped to mutually free semi-circular variables $\dd X_t$, in the large $N$ limit. Thus, the limiting process should be driven by free Brownian motions, $X_t\in\mathcal{A}_t$, conditioned on $L_\infty[0,1]$.
\end{itemize}

We follow this strategy to define QSSEP in the continuum. We consider a filtration of algebras $\mathcal{A}_t$, $t\in \mathbb{R}_+$, all containing a sub-algebra $\mathcal{D}$, i.e. such $\mathcal{D}\subset\mathcal{A}_{t_1}\subset \mathcal{A}_{t_2}$ for $t_1<t_2$, equipped with a measure $\EE^\mathcal{D}$, conditioned on $\mathcal{D}$, and supporting a free Brownian motion $t\in\mathbb{R}_+\to X_t\in\mathcal{A}_t$, conditioned on $\mathcal{D}$. Such motions are made of semi-circular variables $X_t$ with identically distributed increments, $X_{t_2;t_1}:=X_{t_2}-X_{t_1}$, for $t_1<t_2$, which are mutually free over $\mathcal{D}$ if their time intervals $(t_1,t_2]$ do not overlap. They are specified by their  $\mathcal{D}$-valued variance $\EE^\mathcal{D}[X_{t_1}\Delta X_{t_2}]=(t_1\wedge t_2)\, \sigma_\epsilon(\Delta)\in\mathcal{D}$, for $\Delta\in\mathcal{D}$, where the map $\sigma_\epsilon: \mathcal{D}\to \mathcal{D}$ is a completely positive (CP) map over $\mathcal{D}$.  

Given this setup, we define a process $t\to\phi_t\in\mathcal{A}_t$, solution of the following free SDE,
\beq \label{eq:orbit-1}
\phi_{t+\dd t} = e^{i\dd X_t}\phi_te^{-i\dd X_t} + \mathrm{boundary\ conditions},
\eeq
with initial condition $\phi_0\in\mathcal{D}$. Equation \eqref{eq:orbit-1} codes for flows on adjoint orbits with free increments. 
A more precise formulation is given below. We shall actually setup a more general framework in Section \ref{sec:condition-free} and \ref{sec:unitary-orbits}, and apply it subsequently to QSSEP in Section \ref{sec:qssep-continuum}.

For QSSEP, we specify $\mathcal{D}=L_\infty[0,1]$ and the Brownian variance by choosing $\sigma_\epsilon(\Delta) = \epsilon^{-1} G_\epsilon(\Delta)$, for $\Delta\in \mathcal{D}$, with $\epsilon$ a vanishingly small regularizing parameter, and where $G_\epsilon$ is the heat kernel at time $\epsilon$ with boundary conditions depending on the variant of QSSEP we consider. Actually, to get QSSEP in the continuum, we could choose any CP-map such that $\sigma_\epsilon(\Delta)=\sigma_\epsilon(1)\Delta + \partial^2\Delta + O(\epsilon)$, and take the limit $\epsilon\to 0$. See Section \ref{sec:qssep-continuum} for a more precise description.

Conditioning on the algebra of functions on the ambiant space provides a way to restore the notion  of space in algebraic structures otherwise blind to space dependencies.

Given such process, directly defined in the continuum, we then have the equivalence:
\begin{theorem} \label{th:equivalence}
Let $G_s$ be samples of the $N\times N$ matrix of two-point functions of discrete QSSEP. Let $\hat \Delta_k$'s be series of diagonal matrices with entries $(\hat \Delta_k)_{ii}=\Delta_k(i/N)$ for $\Delta_k$ smooth functions on $[0,1]$. 
Let $\phi_t$ be samples of QSSEP in the continuum, viewed as solutions of the free stochastic differential equation \eqref{eq:orbit-1}, in the limit $\epsilon\to 0$, with boundary conditions described in Sections \ref{sec:def-qssep-more} and \ref{sec:reg-bdry}.\\
Then, the process $t\to\phi_t$ represents the scaling limit of the discrete QSSEP in the sense of the convergence of all its $L_\infty[0,1]$-valued dressed moments:
\begin{equation} \label{eq:equivalence}
\lim_{N\to\infty} \EE[(G_s\hat \Delta_1G_s\cdots \hat \Delta_p G_s)_{ii}]\Big\vert_{ i=[xN] \atop s=tN^2} = \EE^\mathcal{D}[\phi_t \Delta_1\phi_t\cdots \Delta_p \phi_t](x)
\end{equation}
for all integer $p\geq0$, and $t\in\mathbb{R}_+$ and $x\in [0,1]$. 
\end{theorem}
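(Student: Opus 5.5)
The plan is to compare the two sides of \eqref{eq:equivalence} through their evolution equations rather than pathwise. Both sides will be shown to satisfy the same closed hierarchy of linear equations with the same initial data, and that hierarchy will have a unique solution.

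\emph{Continuum side.} Using the free It\^o calculus conditioned on $\mathcal{D}$ developed in Sections \ref{sec:condition-free} and \ref{sec:unitary-orbits}, I would expand $e^{i\dd X_t}\phi_t e^{-i\dd X_t}=\phi_t+i[\dd X_t,\phi_t]-\tfrac12[\dd X_t,[\dd X_t,\phi_t]]+\cdots$. The key rule is the conditioned free It\^o rule $\dd X_t\, A\, \dd X_t=\sigma_\epsilon(\EE^\mathcal{D}[A])\,\dd t$ for $A$ adapted. I would apply it to the product $\phi_t\Delta_1\phi_t\cdots\Delta_p\phi_t$ and take $\EE^\mathcal{D}$. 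Terms linear in $\dd X_t$ vanish, by freeness of increments from $\mathcal{A}_t$ over $\mathcal{D}$. The second-order terms come in two kinds:
\begin{itemize}
\item pairings within a single factor, which produce $\sigma_\epsilon(1)$ multiplying the moment together with the local correction;
\item pairings between two factors $\phi_t$ at positions $k<l$, which split the moment into a product of lower moments. In these, the intermediate block $\Delta_k\phi_t\cdots\phi_t\Delta_l$ is first conditioned and then fed into $\sigma_\epsilon$.
\end{itemize}
Taking $\epsilon\to0$ with $\sigma_\epsilon(\Delta)=\sigma_\epsilon(1)\Delta+\partial^2\Delta+O(\epsilon)$, the divergent $\sigma_\epsilon(1)$ pieces must cancel between the two kinds of terms, by the same telescoping as in $[\dd X,[\dd X,\cdot]]$. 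What remains is a closed hierarchy
\[
\partial_t g_p(x;\Delta_\bullet)=\mathcal{L}\,g_p+\sum_{k<l}\mathcal{B}_{kl}(g_{\cdot},g_{\cdot}).
\]
Here $\mathcal{L}$ is a Laplacian acting on the spatial arguments, and the $\mathcal{B}_{kl}$ are quadratic couplings to lower $g_q$. The boundary processes of Sections \ref{sec:def-qssep-more} and \ref{sec:reg-bdry} enter only through boundary conditions, namely $g_0(0)=n_a$ and $g_0(1)=n_b$, with analogous conditions for higher $p$.

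\emph{Discrete side.} For discrete QSSEP (Appendix \ref{app:discrete-qssep}), I would use the known It\^o generator of the matrix SDE \eqref{eq:qssep-first} to write $\partial_s\EE[(G\hat\Delta_1G\cdots G)_{ii}]$. In the scaling $s=tN^2$, $x=i/N$, the discrete Laplacian in the index positions becomes $\partial^2$, and the pair-contraction terms become products of lower dressed moments. Here one needs factorization at leading order: an expectation of a product of cyclic moments equals the product of expectations up to $O(1/N)$. I would prove this inductively using the known scaling $O(N^{-p})$ of cyclic moments. The initial data $G_0$ diagonal corresponds to $\phi_0\in\mathcal{D}$.

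\emph{Conclusion and main obstacle.} The hierarchy is triangular in $p$: $g_0$ solves a heat equation, and each $g_p$ solves an inhomogeneous linear heat equation sourced by $g_q$ with $q<p$. Uniqueness therefore follows by induction on $p$ from uniqueness for the heat equation with the given boundary conditions. Convergence of the discrete solutions also follows by induction, from stability of the discretized heat semigroup applied to converging sources. I expect the main obstacle to be the $\epsilon\to0$ limit. It requires two things: showing that the $\sigma_\epsilon(1)\sim\epsilon^{-1}$ divergences cancel exactly, and getting the boundary terms right, in particular matching the open-case boundary injection with the regularized boundary conditions. I would handle this first for $p=0,1$ explicitly and then use the cyclic structure for general $p$.
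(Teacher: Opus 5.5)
This is essentially the paper's proof: Propositions \ref{prop:C-eqs-motion} and \ref{prop:C-motion-discrete-qssep} show that both sides obey the same triangular hierarchy \eqref{eq:loop-Cn-bis}, and your explicit uniqueness and initial-data matching only spell out what the paper leaves implicit. The $\sigma_\epsilon(1)$ cancellation you flag as the main obstacle is in fact an exact finite-$\epsilon$ identity for every $p$ (Proposition \ref{prop:eq-moments-general}: once the conjugation is transferred onto the $\Delta_k$'s, only $L_\epsilon$ appears), whereas large-$N$ factorization does not follow from the $O(N^{-p})$ scaling of single cyclic moments, since it concerns connected correlations of products of loops, and the paper takes it as the standard random-matrix assumption rather than proving it.
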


This theorem is proven in Section \ref{sec:equivalence} by showing that the $L_\infty[0,1]$-valued moments of $\phi_t$ satisfy the same evolution equations as the scaling limit of the discrete QSSEP moments. We assume finiteness of the moments of $\phi_t$ in the limit $\epsilon\to 0$, and a natural cyclic invariance of the local moments of $\phi_t$.

However, we first make a detour in Sections \ref{sec:condition-free} and \ref{sec:unitary-orbits} to study more general conditioned free processes. The proof of Theorem \ref{th:equivalence} will then be an application of the general setup. After having introduced the basic notions of conditioned free probability and conditioned free stochastic calculus in Section \ref{sec:condition-free}, mostly following \cite{speicher1998combinatorial,speicher2019lecture,Mingo2017Free}, we study conditioned orbits on generic $*$-algebra $\mathcal{A}$ and for generic free Brownian motions on $\mathcal{A}$, conditioned on some subalgebra $\mathcal{D}\subset\mathcal{A}$, see Section \ref{sec:unitary-orbits}. This seems to have not been studied in the free probability literature but to have nice mathematical structures and relations with the physics of open systems. 

It is clear that the above setup can be generalized to any dimensions, to any metric manifold, by replacing $\partial^2$ by the Laplacian on that manifold, or more generally, by replacing it by appropriate second order differential operators such as those generating diffusion processes. We believe that the construction of Sections \ref{sec:condition-free} and \ref{sec:unitary-orbits} has the potential for many other applications beyond QSSEP, in the physics of open systems, and potentially beyond. In parallel, the approach consisting in restoring space dependencies via a sub-algebra (here $\mathcal{D}=L_\infty[0,1]$ in QSSEP, or more generally, the algebra of bounded functions on a manifold) bears similarities with standard notions in non-commutative geometry in which points and functions over space are replaced by appropriate non-commutative algebras.

QSSEP contains classical SSEP as a sub-sector, whose scaling limit fits within macroscopic fluctuation theory \cite{bertini2014macroscopic}, which is an instance of fluctuating hydrodynamics. Our construction thus contains a sub-sector coding for a weak noise hydrodynamics, in a maybe hidden way. This raises the question: Can we explicitly construct the dynamics of the QSSEP quantum state $\rho_s$ in the continuum? This would provide a quantum extension of fluctuating hydrodynamics, at least in the weak noise limit\footnote{We hope to come back to this point in a not too far future.}.

Mixing such quantum extension of fluctuating hydrodynamics with the ongoing study of interacting quantum exclusion processes (iQSEP) \cite{BernardScopa2026} would provide a clear route toward formulating a {\it quantum mesoscopic fluctuation theory} (QMFT) \cite{bernard2021can}.

In theoretical physics, scaling limits of stochastic model systems, such as \eqref{eq:qssep-first}, may be analyzed using tools from statistical field theory. It would be interesting to bridge the free probability and field theory approaches. This might have further applications.

\section{Conditioned free processes: basics}
\label{sec:condition-free}

We start by introducing necessary notions from conditioned free probability. We essentially follow R. Speicher's review \cite{speicher1998combinatorial}, see also \cite{shlyakhtenko1996random}. Although known from the free probability literature, this background material will be needed to make Sections \ref{sec:unitary-orbits} and \ref{sec:qssep-continuum} readable. The reader expert in free probability may skip this Section and move directly to the following. Since the main part of this manuscrit are Sections \ref{sec:unitary-orbits} and \ref{sec:qssep-continuum}, other readers may also skip to the next Section for a quicker initial reading.

\subsection{Conditioned free probability}

Let $\mathcal{A}$ be a unital $*$-algebra and $\mathcal{D}\subset \mathcal{A}$ a unital $*$-subalgebra.

\medskip \noindent
$\bullet$ \underline{Conditioned measure and operator valued free cumulants.}
By analogy with the commutative case, a measure $\EE^\mathcal{D}$ over $\mathcal{A}$, conditioned on $\mathcal{D}$, is a linear map, 
\beq \label{eq:conditioned-measure}
\EE^\mathcal{D}:\, \mathcal{A}\to \mathcal{D},\ \mathrm{s.t.}\ \EE^\mathcal{D}[\Delta_1a\Delta_2]=\Delta_1 \EE^\mathcal{D}[a] \Delta_2,
\eeq
for any $a\in\mathcal{A}$ and $\Delta_1,\Delta_2\in \mathcal{D}$.
If $\mathcal{D}$ is equipped with a state $\varphi_\mathcal{D}$,  then $\EE:=\varphi_\mathcal{D}\circ \EE^\mathcal{D}$  defines an unconditioned measure on $\mathcal{A}$.

 Given $a_1,\cdots,a_{p+1} \in \mathcal{A}$, their $\mathcal{D}$-valued moments are defined as the conditioned expectation values $C_{p+1}^{(a_1,\cdots,a_{p+1})}[\Delta_1,\cdots,\Delta_p]:= \EE^\mathcal{D}[a_1\Delta_1a_2\cdots \Delta_{p}a_{p+1}]\in\mathcal{D}$, with all $\Delta_k\in\mathcal{D}$. Similarly as in the unconditioned case, $\mathcal{D}$-valued free cumulants $\kappa_\pi\in \mathcal{D}$ are recursively defined by the moment-cumulant relation \cite{speicher1998combinatorial}:
 \beq \label{eq:def-cumulants}
 \EE^\mathcal{D}[a_1\Delta_1a_2\cdots \Delta_{p}a_{p+1}] =: \sum_{\pi\in NC_{p+1}} \kappa_\pi^{(a_1,\cdots,a_{p+1})}[\Delta_1,\cdots,\Delta_{p}],
 \eeq
 where the sum is over the non-crossing partitions  $\pi$ of  $p+1$ ordered objects. Of course $C_{p+1}$ and $\kappa_\pi$ depend on the chosen collection of random variables $a_1,a_{2},\cdots$. We shall suppress this explicit dependence when there is no ambiguity.  The $\kappa_\pi$'s inherit in \eqref{eq:def-cumulants} the nested structure of non-crossing partitions, and  hence are all determined from the elementary cumulants $\kappa_{p+1}$ associated to the single block partitions of $p+1$ elements, see \cite{speicher2019lecture,Nica2006Lecture,biane2003freeprobabilitycombinatorics,speicher1998combinatorial}. For instance:
\begin{align*} 
C_1&= \kappa_1,\\
C_2[\Delta]&= \kappa_2[\Delta]+ \kappa_1\Delta\kappa_1,\\
C_3[\Delta_1,\Delta_2]&= \kappa_3[\Delta_1,\Delta_2] + \kappa_2[\Delta_1\kappa_1\Delta_2]+ \kappa_1\Delta_1\kappa_2[\Delta_2]+ \kappa_2[\Delta_1]\Delta_2\kappa_1+ \kappa_1\Delta_1\kappa_1\Delta_2\kappa_1,\\
C_4[\Delta_1,\Delta_2,\Delta_3]&= \kappa_4[\Delta_1,\Delta_2,\Delta_3]+ \kappa_3[\Delta_1,\Delta_2\kappa_1\Delta_3]+\kappa_3[\Delta_1\kappa_1\Delta_2,\Delta_3] + \kappa_3[\Delta_1,\Delta_2]\Delta_3 \kappa_1 \\
 &~~~ + \kappa_1\Delta_1\kappa_3[\Delta_2,\Delta_3] + \kappa_1\Delta_1\kappa_1\Delta_2\kappa_2[\Delta_3]+  \kappa_2[\Delta_1]\Delta_2\kappa_1\Delta_3\kappa_1+ \kappa_1\Delta_1\kappa_2[\Delta_2]\Delta_3\kappa_1\\
 &~~~ + \kappa_2[\Delta_1\kappa_1\Delta_2]\Delta_3\kappa_1  + \kappa_2[\Delta_1\kappa_1\Delta_2\kappa_1\Delta_3] + \kappa_1\Delta_1\kappa_2[\Delta_2\kappa_1\Delta_3]\\
 &~~~ + \kappa_2[\Delta_1\kappa_2[\Delta_2]\Delta_3]+ \kappa_2[\Delta_1]\Delta_2\kappa_2[\Delta_3]  +\kappa_1\Delta_1\kappa_1\Delta_2\kappa_1\Delta_3\kappa_1 ,
\end{align*} 
with all $\Delta_k\in\mathcal{D}$. Recall that the numbers of non-crossing partitions of order $n$ are the Catalan numbers, and  equal to $1,2,5,14,\cdots$ for $n=1,2,3,4\cdots$.
 
\medskip \noindent
$\bullet$ \underline{Freeness over $\mathcal{D}$.}
The definition of freeness in the conditioned context is similar to that in the unconditioned case \cite{speicher1998combinatorial}. Countable series of sub-algebras $\mathcal{A}_k$ are mutually free over $\mathcal{D}$ if $\EE^\mathcal{D}[a_{i_1}\cdots a_{i_n}]=0$, for any $a_{i}\in \mathcal{A}_{i}$, such $i_k\not= i_{k+1}$,  with vanishing expectation values $\EE^\mathcal{D}[a_i]=0$. For short, we shall call them $\mathcal{D}$-free variables.

Freeness is equivalent to the additivity of the $\mathcal{D}$-valued free cumulants. That is: the $\mathcal{D}$-valued  free cumulants of the sum of $\mathcal{D}$-free variables is the sum of the respective free cumulants. Alternatively, mixed free cumulants of mutually free variables vanish.

For instance, for $X_i, Y_j$ mutually free, with $\EE^\mathcal{D}[Y_j]=0$, we have: 
\beq \label{eq:mixed-formula}
\EE^\mathcal{D}[X_1Y_1X_2Y_2X_3] = \EE^\mathcal{D}[X_1\, \kappa_2^{(Y_1,Y_2)}[\kappa_1^{(X_2}]\, X_3].
\eeq
This type of formula will be useful when doing stochastic calculus for processes with $\mathcal{D}$-free increments.

\subsection{Free Fock spaces and representation of free variables}

Following \cite{speicher1998combinatorial} we present an explicit representation of $\mathcal{D}$-free variables using appropriate creation-annihilation operators. This extends Voiculescu's construction \cite{voiculescu1991limit}.
\medskip

\noindent
$\bullet$ \underline{Free Fock space over $\mathcal{D}$.}
Let $\mathcal{D}$ be a unital $*$-algebra, as above. Let $1_\mathcal{D}$ be the identity in $\mathcal{D}$ (we might simplify the notation by denoting it as $1$, when the context is clear). 
Let $M$ be a $\mathcal{D}$-bimodule under left-right multiplication, equipped with a $\mathcal{D}$-valued sesquilinear map $\langle\cdot,\cdot\rangle: M \times M \to \mathcal{D}$ such that:
\[
\langle \Delta_0 m_1\Delta_1,m_2\Delta_2\rangle = \Delta_1\langle m_1, \Delta_0^*m_2 \rangle \Delta_2 \in \mathcal{D},
\]
for $m_{1,2}\in M$ and $\Delta_{0,1,2}\in\mathcal{D}$.
We can (and will) think about it as "scalar product", $\langle m_1, m_2 \rangle = \langle \Omega_\mathcal{D}| m_1^*m_2 |\Omega_\mathcal{D}\rangle$, with respect to a reference state $|\Omega_\mathcal{D}\rangle$. Of course, the expression "scalar product" is an abuse of wordings since one has to remember that it takes value in $\mathcal{D}$, but we shall nevertheless use it. There are positivity constraints on this scalar product which we shall not detailed here, except in Remark \ref{remark:positivity} and when discussing the application to QSSEP.

The free Fock space of $M$ over $\mathcal{D}$, denote $\mathcal{F}_\mathcal{D}[M]$, is defined as:
\beq \label{eq:def-Fock}
\mathcal{F}_\mathcal{D}[M] := \mathcal{D} \oplus M \oplus (M \otimes_\mathcal{D} M) \oplus (M \otimes_\mathcal{D} M \otimes_\mathcal{D} M) \oplus \cdots .
\eeq
That is: elements of $\mathcal{F}_\mathcal{D}[M]$ are linear combinations of words of the form $ m_1\otimes m_2\otimes\cdots \otimes m_k$, with all $m_j\in M$. The space $\mathcal{F}_\mathcal{D}[M]$ is equipped with a $\mathcal{D}$-valued sesquilinear form, defined in a nested way:
\[
\langle m_1\otimes m_2\otimes\cdots \otimes m_k, \hat m_1\otimes \hat m_2\otimes\cdots \otimes \hat m_l\rangle
:= \delta_{k;l} \langle m_k, \langle m_{k-1}, \cdots \langle m_1, \hat m_1 \rangle \cdots \hat m_{k-1} \rangle  \hat m_k\rangle .
\]
The algebra $\mathcal{D}$ acts on $\mathcal{F}_\mathcal{D}[M]$ by left (right) multiplication, and we identify $\mathcal{D}$ with its left (right) action on $\mathcal{F}_\mathcal{D}[M]$. 

\medskip \noindent
$\bullet$ \underline{Creation-annihilation operators.}
Creation-annihilation operators on $\mathcal{F}_\mathcal{D}[M]$ are defined similarly as the usual one, but with modified ($q=0$) commutation relations. For any $m,m_k\in M$ and $\Delta\in \mathcal{D}$, they are defined by:
\begin{align*}
\ell^*(m) \Delta &:= m\Delta,\\
\ell^*(m) m_1\otimes m_2\otimes\cdots \otimes m_k &:= m \otimes m_1\otimes m_2\otimes\cdots \otimes m_k,\\
\ell(m) \Delta &:= 0,\\
\ell(m) m_1\otimes m_2\otimes\cdots \otimes m_k &:= \vev{m,m_1}\,  m_2\otimes\cdots \otimes m_k.
\end{align*}
Note that $\ell(m)$ is anti-linear in $m$ and $\ell^*(m)$ is linear in $m$. We have $\ell(\Delta_1m\Delta_2)=\Delta_2^*\ell(m)\Delta_1^*$ and $\ell^*(\Delta_1m\Delta_2)=\Delta_1\ell^*(m)\Delta_2$. 
They satisfy the following $(q=0)$ commutation relations, for $m,m'\in M$:
\[
\ell(m)\ell^*(m')=\vev{m,m'}.
\]
If $(e_i)_i$ is an ortho-normalized basis $M$, i.e. $\vev{e_i,e_j}= \delta_{ij} 1_\mathcal{D}$, so that $m=\sum_i e_i \vev{e_i,m}$ for any $m\in M$. Then $\sum_i \ell^*(e_i)\ell(e_i) = 1-P_\mathcal{D}$, with $P_\mathcal{D}$ the projector on $\mathcal{D}$.

The free Fock space $\mathcal{F}_\mathcal{D}[M]$ is generated by the left action of $\mathcal{D}$ and successive actions of the creation operators on $1_\mathcal{D}$: 
\begin{align*}
\Delta\cdot 1_\mathcal{D} &= \Delta\\
\ell^*(m_1)\cdots \ell^*(m_k) 1_\mathcal{D} &= m_1\otimes m_2\otimes\cdots \otimes m_k
\end{align*}
We can view $1_\mathcal{D}$ as the vacuum $\ket{\Omega_\mathcal{D}}$. The free Fock space is spanned by $\ell^*(m_1)\cdots \ell^*(m_k)\ket{\Omega_\mathcal{D}}$. The $\mathcal{\mathcal{D}}$-valued scalar product is such that $\bra{\Omega_\mathcal{D}}\Delta\ket{\Omega_\mathcal{D}}=\Delta$ and
\[
\bra{\Omega_\mathcal{D}}\ell(m_l)\cdots \ell(m_k)\, \ell^*(\hat m_1)\cdots \ell^*(\hat m_k)\ket{\Omega_\mathcal{D}}
=\delta_{k;l} \langle m_k, \langle m_{k-1}, \cdots \langle m_1, \hat m_1 \rangle \cdots \hat m_{k-1} \rangle  \hat m_k\rangle \in \mathcal{D}
\]
We set $\bra{\Omega_\mathcal{D}}\Omega_\mathcal{D}\rangle=1_\mathcal{D}$. 
The way to compute these scalar products is to use recursively the relation $\ell(m)\ell^*(m')=\vev{m,m'}$ to move the operators $\ell(m_j)$ (resp. $\ell^*(\hat m_j)$) to the right (resp. to the left). 

\medskip \noindent
$\bullet$ \underline{Free Fock spaces over a Hilbert space (generated by indeterminates).}
Let $V$ be a Hilbert space. We take $M=\mathcal{D}\otimes V \otimes \mathcal{D}$ as a $\mathcal{D}$-bimodule, with left/right action $\Delta_1(\Delta\otimes\xi\otimes \Delta')\Delta_2= \Delta_1\Delta\otimes\xi\otimes \Delta'\Delta_2$, with $\xi\in V$ and $\Delta, \Delta'\in\mathcal{D}$. We identify $1_\mathcal{D}\otimes\xi\otimes 1_\mathcal{D}\equiv \xi$ and $\Delta\otimes\xi\otimes \Delta'\equiv \Delta\xi \Delta'$. As before, we give ourself a $\mathcal{D}$-valued sesquilinear map $\vev{\xi_1,\Delta\xi_2}\in \mathcal{D}$, and define as above the creation-annihilation operators, $\ell(\xi)$ and $\ell^*(\xi)$, for $\xi\in V$.
For $(\xi^a)_a$ a basis of $V$, elements of $M$ are linear combinations of elements of the form $\Delta\xi^a\Delta'$. Similarly, elements of $\mathcal{F}_D(M)$ are linear combinations of words of the form $\Delta_0\xi^{a_1}\Delta_1\xi^{a_2}\cdots \Delta_{p} \xi^{a_p} \Delta_{p+1}$.

We define the $\mathcal{D}$-valued variance, or "metric", $g_{ab}: \mathcal{D}\to \mathcal{D}$ by:
\beq
g^{ab}(\Delta):=\vev{\xi^a,\Delta\xi^b} \in \mathcal{D}.
\eeq
We set,
\[
\mathfrak{l}^a :=\ell(\xi^a),\quad \mathfrak{l}^{a\,*}:=\ell^*(\xi^a),
\]
with the relation,
\beq \label{eq:L-commutation}
\mathfrak{l}^a \Delta \mathfrak{l}^{b\,*} = g^{ab}(\Delta) .
\eeq

The free Fock space is then generated by left actions of $\mathcal{D}$ and successive actions of the $\mathfrak{l}^{a\,*}$'s on the vacuum $\ket{\Omega_\mathcal{D}}=1_\mathcal{D}$ with the identification:
\[
\Delta_0\mathfrak{l}^{a_1\,*}\Delta_1\mathfrak{l}^{a_2\,*}\cdots \Delta_{p-1}\mathfrak{l}^{a_p\,*} \Delta_n\ket{\Omega_\mathcal{D}} \equiv \Delta_0\xi^{a_1}\Delta_1\xi^{a_2}\cdots \Delta_{p-1} \xi^{a_p} \Delta_p .
\]
Depending on the context we denote this free Fock space as $F_\mathcal{D}[V]$  or $F_\mathcal{D}[\mathfrak{l}^a,\mathfrak{l}^{a\,*}]$. It comes equipped with sesquilinear map, which can be computed using $\mathfrak{l}^a \Delta \mathfrak{l}^{b\,*} = g^{ab}(\Delta)$ and $\bra{\Omega_\mathcal{D}}\Delta\ket{\Omega_\mathcal{D}} =\Delta$. Note that $g^{ab}(1_\mathcal{D})$ may be non trivial. 

\medskip \noindent
$\bullet$ \underline{Representation of free variables.}
Let $F_\mathcal{D}[\mathfrak{l}^a,\mathfrak{l}^{a\,*}]$ be the above Fock space. We set 
\beq \label{eq:def-X}
X^a := \mathfrak{l}^a + \mathfrak{l}^{a\,*}.
\eeq
Let $\hat{\mathcal{A}}:=\vev{\!\vev{\mathcal{D};\mathfrak{l}^a,\mathfrak{l}^{a\,*}}\!}$ be the algebra generated by $\mathcal{D}$ and the creation-annihillation operators $\mathfrak{l}^a,\mathfrak{l}^{a\,*}$, with the relation $\mathfrak{l}^a \Delta \mathfrak{l}^{b\,*}= g^{ab}(\Delta)$. Let $\mathcal{A}:=\vev{\!\vev{\mathcal{D};X^a}\!}$ be the algebra generated by $\mathcal{D}$ and the operators $X^a$. Of course $\mathcal{D}\subset \mathcal{A}\subset\hat{\mathcal{A}}$.

The conditioned expectation values on $\hat{\mathcal{A}}$ are defined as the "vacuum expectation values", that is:
\begin{align*}
\EE^\mathcal{\mathcal{D}}[\Delta] &:= \bra{\Omega_\mathcal{D}}\Delta\ket{\Omega_\mathcal{D}} = \Delta  \in \mathcal{D} ,\\
\EE^\mathcal{\mathcal{D}}[W]&:= \bra{\Omega_\mathcal{D}}W\ket{\Omega_\mathcal{D}} \in \mathcal{D} ,
\end{align*}
for any word $W=\Delta_0\mathfrak{l}^{a_1\,\varepsilon_1}\Delta_1\mathfrak{l}^{a_2\,\varepsilon_2}\cdots \Delta_{p-1}\mathfrak{l}^{a_{p}\,\varepsilon_{p}} \Delta_{p}$ with $\varepsilon =\cdot,*$. It is clear that $\EE^\mathcal{D}[W]$ can be computed using the relation $\mathfrak{l}^a \Delta \mathfrak{l}^{b\,*}= g^{ab}(\Delta)$, and $\mathfrak{l}^a\ket{\Omega_\mathcal{D}}=0$ and $\bra{\Omega_\mathcal{D}}\mathfrak{l}^{a\,*}=0$. We have
\[
\EE^\mathcal{D}[X^a]=0,\quad \EE^\mathcal{D}[X^a\Delta X^b]=g^{ab}(\Delta).
\]
Moments of the $X^a$'s can be decomposed on their $\mathcal{D}$-valued free cumulants, as defined in \eqref{eq:def-cumulants}. The variables $X^a$ form a (centred) semi-circular system over $\mathcal{D}$, since only their two first free cumulants $\kappa^a_1$ and $\kappa^{ab}_2$ are non vanishing. Here, $\kappa^a_1=0$ (for centred variables) and $\kappa^{ab}_2(\Delta)=g^{ab}(\Delta)$. 

For such systems, their moments $\EE^\mathcal{D}[X^{a_1}\cdots X^{a_{p+1}}]$ decompose on non-crossing pairings only: 
\[
\EE^\mathcal{D}[X^{a_1}\Delta_1X^{a_2}\cdots \Delta_{p} X^{a_{p+1}}]=\sum_{\overset{\pi\in NC_{p+1}}{\mathrm{pairing}}} \kappa_\pi[\Delta_1,\cdots,\Delta_{p}].
\]
For non-crossing pairings, $\kappa_\pi$ is a nested product of the variances $\kappa_2^{ab}$.
To exercice ourselves, let us prove this formula: (i) replace $X^{a_1}$ by $\mathfrak{l}^{a_1}$ and $X^{a_n}$ by $\mathfrak{l}^{a_n\,*}$, since the vacuum $\ket{\omega_\mathcal{D}}$ annihilates $\mathfrak{l}$ (and its dual by $\mathfrak{l}^*$), (ii) start from $\mathfrak{l}^{a_1}$ and read toward the right up to encountering the first $\mathfrak{l}^*$ operator, so as to get $\mathfrak{l}^{a_1}\cdots \mathfrak{l}^{a_{k-1}}\Delta_{a_k} \mathfrak{l}^{a_k\,*}\cdots$, then pair $\{a_{k-1},a_k\}$ using $\mathfrak{l}^{a_{k-1}}\Delta_{a_k\,•} \mathfrak{l}^{a_k}=g^{a_{k-1}a_k}(\Delta_{a_k})$ to be left with $\mathfrak{l}^{a_1}\cdots \mathfrak{l}^{a_{k-2}}g^{a_{k-1}a_k}(\Delta_{a_k})\cdots$; (iii) start from $\mathfrak{l}^{a_{k-2}}$ and read again toward the right to hit the first $\mathfrak{l}^*$ and do the pairing as above, until $\mathfrak{l}^{a_1}$ as been paired; (iv) then either the  remaining first operator is a $\mathfrak{l}^*$ and the output is zero, or the first operator is a $\mathfrak{l}$ and start the procedure as in (i-iii). It is clear that this yields a non-crossing pairing. And we get all non-crossing pairing this way.

Finally, if $V$ is the direct sum of two sub-spaces, $V=V^{(1)}\oplus V^{(2)}$, then the two algebras $\hat{\mathcal{A}}^{(1)}:=\vev{\!\vev{\mathcal{D};\mathfrak{l}^{(1)},\mathfrak{l}^{(1)\,*}}\!}$ and $\hat{\mathcal{A}}^{(2)}:=\vev{\!\vev{\mathcal{D};\mathfrak{l}^{(2)},\mathfrak{l}^{(2)\,*}}\!}$ are mutually $\mathcal{D}$-free. Indeed, if $a^{(1)}_i$ (resp. $a^{(2)}_j$) are elements of $\hat{\mathcal{A}}^{(1)}$ (resp. $\hat{\mathcal{A}}^{(2)}$) with vanishing expectation values they have zero component on $\mathcal{D}$ in the direct sum defining the full Fock space. Hence their multiple alternating products (i.e. $a^{(1)}_{i_1}a^{(2)}_{j_1}\cdots $) also have zero component on $\mathcal{D}$, and thus a vanishing expectation value.

\begin{remark} \label{remark:positivity}
Given a state $\varphi_\mathcal{D}$ on $\mathcal{D}$, we can view the free Fock space as a Hilbert space with norm $||W|\Omega_\mathcal{D}\rangle||^2= \varphi_\mathcal{D}[\langle\Omega_\mathcal{D}|W^*W|\Omega_\mathcal{D}\rangle]$. For instance for $W=\sum_a x_a \Delta_a'\mathfrak{l}^{a\,*}\Delta_a$, we have $||W|\Omega_\mathcal{D}\rangle||^2 = \sum_{a,b} x_b^*\varphi_\mathcal{D}[\Delta_b^*\, g^{ba}({\Delta_b'}^*\Delta_a')\Delta_a]x_a$.
Assuming that $\varphi_\mathcal{D}$ preserves positivity, the positivity of $||W|\Omega_\mathcal{D}\rangle||^2$ demands that $g^{ba}$ also preserves positivity. This is equivalent to demanding that $\EE^\mathcal{D}$ is a positive measure, that is $\EE^\mathcal{D}[W^*W]>0$ for any $W$. This imposes strong contraints on the possible variances. 
\end{remark}

\subsection{Operator valued free stochastic integrals}

We now extend the above construction to deal with conditioned free Brownian motions and the corresponding stochastic calculus.
\medskip

\noindent
$\bullet$ \underline{$\mathcal{D}$ valued semi-circular free Brownian motions.}
We replace the Hilbert space $V$ of previous construction by the space of functions from $\mathbb{R}_+$ to $V$, that is $L^2[\mathbb{R}_+]\otimes V$. We let $(\xi^a)$ be a basis of $V$, as above, and set $X_t^a=\mathfrak{l}_t^a+\mathfrak{l}^{a\,*}_t$, where $\mathfrak{l}^a_t=\ell(1_{[0,t]}\otimes \xi^a)$ and $\mathfrak{l}^{a\,*}_t=\ell^*(1_{[0,t]}\otimes \xi^a)$. They form a semi-circular system, with\footnote{With the usual notation $t\wedge s:=\mathrm{min}(t,s)$.}:
\[
\EE^\mathcal{D}[X_t^a]=0,\quad \EE^\mathcal{\mathcal{D}}[X^a_t\Delta X^b_s]=g_{t\wedge s}^{ab}(\Delta) \in \mathcal{D},
\]
with $g_{t\wedge s}^{ab}:\mathcal{D}\to\mathcal{D}$ their covariance matrix. The simplest case is $g_{t\wedge s}^{ab}=(t\wedge s)\,\sigma^{ab}$ (but generalizations with more involve time dependence are possible).

We let $\mathcal{A}_t=\vev{\!\vev{\mathcal{D}; X^a_s,\, s\leq t}\!}$ be the algebra generated by $\mathcal{D}$ and all $\{X^a_s$, $s\leq t\}$, up to time $t$. Clearly they form a filtration of algebras, i.e. $\mathcal{A}_s\subset \mathcal{A}_t$ for $s\leq t$. Similarly, we set $\hat{\mathcal{A}}_t=\vev{\!\vev{\mathcal{D}; \mathfrak{l}^a_s, \mathfrak{l}^{a\,*}_s,\, s\leq t}\!}$.

The increment $X^a_{t,s}:=X_t^a-X^a_s$, for $s\leq t$, are $\mathcal{D}$-free w.r.t. $A_{s}$, since their mixed $\mathcal{D}$-cumulants vanish, $\kappa_2^\mathcal{D}[X^a_{t,s},X^b_{s'}](\Delta) = g_{t\wedge s'}^{ab}(\Delta) - g_{s\wedge s'}^{ab}(\Delta) =0$, for $s'\leq s \leq t$. Alternatively, this mutual freeness follows from the last remark of previous Section since $L^2([0,t))=L^2([0,s))\oplus L^2([s,t))$. 

We let  $\dd X_t^a:= X^a_{t+\dd t}-X^a_t$ be the infinitesimal increments. By construction, $\dd X^a_t$ are free w.r.t. $\mathcal{A}_t$. We have $\EE^\mathcal{D}[\dd X^a_t\Delta \dd X^b_t]:=\dot g_t^{ab}(\Delta)\dd t $ with $g_t^{ab}:=\partial_t g_t^{ab}$.  In the simplest case with $g_{t\wedge s}^{ab}=(t\wedge s)\,\sigma^{ab}$, we have $\dot g_t^{ab}=\sigma^{ab}$.

\medskip \noindent
$\bullet$ \underline{Stochastic integrals w.r.t. $\dd X^a_t$ and It\^o rules.}
As usual, free stochastic integrals are defined by discretization and limits. For adapted processes $A_s,C_s\in \mathcal{A}_{s}$, their integral w.r.t. $\dd X^a_s$ are defined by:
\beq \label{eq:def-int-stoc}
\int_0^t\! A_s\,(\dd X^a_s)\, C_s := \lim_{M\to\infty} \sum_{k=0}^{M-1} A_{t_k} \delta X^a_{t_{k+1};t_k} C_{t_k},
\eeq
with $\delta X^a_{t_{k+1};t_k} := X^a_{t_{k+1}}-X^a_{t_k}$ and $\{t_k,\, k=1,\cdots,M\}$ a discretization of the interval $[0,t]$, finer as $M$ increases, with $0=t_0<t_1<\cdots<t_M=t$ and $\delta t_k := t_{k+1}-t_k\sim t/M$ as $M\to\infty$. We can choose $t_k=tk/M$.

The free It\^o formula follows by looking at the product of two stochastic integrals. 
Let  $F^a_t:=\int_0^t A_s(dX^a_s)B_s$ and  $G^b_t:=\int_0^t C_s(dX^b_s)D_s$ be two such integrals, with $A_s,B_s,C_s,D_s\in\mathcal{A}_s$, then:
\beq \label{eq:free-Ito-stoc}
F^a_t\,G^b_t = \int_0^t\! (A_s(\dd X^a_s)B_s)\,G^b_s + \int_0^t\! F_s\, (C_s(\dd X^a_s)D_s) + \int_0^t\! A_s\, \dot g_s^{ab}(\EE^\mathcal{D}[B_sC_s])\, D_s\, \dd s.
\eeq
Equivalently, since $\dd F^a_t=A_t(dX^a_t)B_t$ and  $\dd G^b_t=C_t(dX^b_t)D_t$, we read the free It\^o rules,
\beq \label{eq:free-Ito}
\dd (F_t^aG_t^b) = (\dd F_t^a) G_t^b + F_t^a (\dd G_t^b)  + A_t\, \dot g_t^{ab}(\EE^\mathcal{D}[B_tC_t])\, D_t\, \dd t.
\eeq
A proof can be found in \cite{biane1998stochastic} in the unconditioned case. For completeness, we present in Appendix \ref{app:free-ito-rules} a sketch of a slight extension of this proof to the conditioned case. A noticeable fact is that the free It\^o contraction involves the conditioned expectation value of the product of the intermediate variables. This will play a role when analyzing free processes. Of course it is simpler to deal with $t$-independent variances, i.e. $\dot g_t^{ab}=\sigma^{ab}$.

\section{Conditioned adjoint orbits with free increments}
\label{sec:unitary-orbits}

We now use the previous framework to describe adjoint orbits of unitary operators, solutions of conditioned stochastic differential equations with free increments. Those will later be key ingredients to construct QSSEP in the continuum.

\subsection{Unitary flows with free increments and their adjoint orbits}

$\bullet$ \underline{Setup.}
We start with a filtration of unital $*$-algebras $\mathcal{A}_t$, indexed by $t\in\mathbb{R}_+$, with $\mathcal{A}_s\subset \mathcal{A}_t$ for $s<t$, a $*$-subalgebra $\mathcal{D}\subset \mathcal{A}_0$, and a $\mathcal{D}$-valued semi-circular free Brownian motion $X_t\in \mathcal{A}_t$, with $X_t^*=X_t$ and $X_t=0$, zero mean and $\mathcal{D}$-conditioned variance,
\beq
\EE^\mathcal{D}[X_t \Delta X_s]= (t\wedge s)\, \sigma_\epsilon(\Delta) \in \mathcal{D},
\eeq 
for $\Delta\in \mathcal{D}$, and mutually $\mathcal{D}$-free increments. We assume a minimal setting in which $\mathcal{A}_t$ is generated by $X_t$ over $\mathcal{D}$, with $\mathcal{A}_0=\mathcal{D}$. In other words, the algebras $\mathcal{A}_t$ are made of linear combinations of words $\Delta_0 X_{t_1}\Delta_1X_{t_2}\cdots \Delta_p X_{t_p}\Delta_{p+1}$, with all $\Delta_k\in\mathcal{D}$ and all times $t_k\leq t$. The expectation value $\EE^\mathcal{D}$ is a conditioned expectation value over the subalgebra $\mathcal{D}$. For disjoint intervals $[t_i,t_{i+1}]$, the increments $X_{t_i,t_{i+1}}:=X_{t_{i+1}}-X_{t_i}$ generate mutually free subalgebras. We let $\dd X_t:=X_{t+\dd t}-X_t$ be the infinitesimal increments, with variance $\EE^\mathcal{D}[\dd X_t \Delta \dd X_t] = \sigma_\epsilon(\Delta)\, \dd t$.

The variance is specified by the map $\sigma_\epsilon:\mathcal{D}\to \mathcal{D}$, which is assumed to be a completely positive (CP) map on $\mathcal{D}$ to ensure appropriate positivity.  This CP-map is neither supposed to preserve the identity nor to be normalized. The index $\epsilon$ refers to a possible parametrization of this CP-map, which will later serve as a regularization. 

For $A_t,B_t,C_t\in \mathcal{A}_t$, the free It\^o rules read: 
\beq \label{eq:Ito-sigma}
A_t\,\dd X_t\, B_t\, \dd X_t\, C_t = A_t\, \sigma_\epsilon(\EE^\mathcal{D}[B_t])\, C_t\, \dd t .
\eeq
In particular, $\dd X_t  \dd X_t =\sigma_\epsilon(1)\dd t$.
\medskip

\noindent
$\bullet$ \underline{Adjoint orbits with free increments.}
We define a (so-called) evolution operator $U_t\in\mathcal{A}_t$ as solution of the following free stochastic differential equation (SDE), $\dd U_t = (i\dd X_t -\frac{\sigma_\epsilon(1)}{2}\dd t)\, U_t$, with $U_{t=0}=1$, or equivalently via,
\beq \label{eq:def-Ut}
U_t = 1 + i\int_0^t\!\! (\dd X_s)U_s - \frac{\sigma_\epsilon(1)}{2} \int_0^t\!\! \dd s\, U_s .
\eeq
Alternatively, we can view $U_t$ as defined by the free SDE, $U_{t+\dd t} U_t^{-1}= e^{i \dd X_t}$.

\begin{proposition} ~\\
-- The operator $U_t$ is unitary, $U_t^*=U_t^{-1}$.\\
-- The increment $U_tU_s^{-1}$ $(t\geq s)$ are free w.r.t. $\mathcal{A}_s$ and identically distributed. 
\end{proposition}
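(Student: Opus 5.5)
For unitarity, I will use the free It\^o rules \eqref{eq:Ito-sigma} to show that $U_t^*U_t$ and $U_tU_t^*$ stay equal to $1$. Taking the adjoint of \eqref{eq:def-Ut}, and using $X_t^*=X_t$ together with $\sigma_\epsilon(1)^*=\sigma_\epsilon(1)$ (since $\sigma_\epsilon$ is CP, hence $*$-preserving), I get
\[
\dd U_t^* = U_t^*\big(-i\dd X_t - \tfrac{\sigma_\epsilon(1)}{2}\dd t\big).
\]
Then I compute $\dd(U_t^*U_t)$ with \eqref{eq:free-Ito}. Write $K_t:=U_t^*U_t$. The linear terms are $-iU_t^*\dd X_t U_t + iU_t^*\dd X_t U_t - \tfrac12 U_t^*\{\sigma_\epsilon(1),U_t\}\dd t$, and the It\^o contraction is $U_t^*\,\sigma_\epsilon(\EE^\mathcal{D}[1])\,U_t\,\dd t = U_t^*\sigma_\epsilon(1)U_t\,\dd t$. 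The noise terms cancel. One must be careful because $\sigma_\epsilon(1)\in\mathcal{D}$ need not commute with $U_t$, so it is cleaner to look at the other product, $P_t:=U_tU_t^*$. For $P_t$,
\[
\dd P_t = i\,\dd X_t\,P_t - i\,P_t\,\dd X_t - \tfrac12\{\sigma_\epsilon(1),P_t\}\dd t + \sigma_\epsilon(\EE^\mathcal{D}[P_t])\,\dd t .
\]
Here $P_t\equiv 1$ is a solution, since the drift becomes $-\sigma_\epsilon(1)+\sigma_\epsilon(1)=0$ and the noise terms cancel. By uniqueness of solutions of this linear free SDE (Picard iteration in the norm of $\hat{\mathcal{A}}_t$), $U_tU_t^*=1$.

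For $U_t^*U_t$ the contraction is $U_t^*\sigma_\epsilon(\EE^\mathcal{D}[1])U_t$, which again cancels the drift $-U_t^*\sigma_\epsilon(1)U_t$. So $\dd(U_t^*U_t)=0$ exactly, and $U_t^*U_t=1$ as well. Hence $U_t$ is unitary. The main subtlety is keeping the non-commutativity of $\sigma_\epsilon(1)$ with $U_t$ straight in the drift terms, and making sure the It\^o contraction is written with the correct middle factor.

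For the second claim, I will set $V_{t,s}:=U_tU_s^{-1}=U_tU_s^*$ for $t\ge s$. Since $U_s$ does not depend on $t$, $V_{t,s}$ solves
\[
\dd_t V_{t,s} = \big(i\dd X_t - \tfrac{\sigma_\epsilon(1)}{2}\dd t\big)V_{t,s},\qquad V_{s,s}=1 .
\]
Iterating this equation (or taking the limit of the discretization \eqref{eq:def-int-stoc}) expresses $V_{t,s}$ as a limit of polynomials in $\mathcal{D}$ and the increments $X_{u}-X_s$ with $u\in[s,t]$. These increments generate an algebra that is $\mathcal{D}$-free from $\mathcal{A}_s$, by the remark on $L^2([0,s))\oplus L^2([s,t))$. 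So $V_{t,s}$ lies in that algebra (closed under limits) and is therefore free from $\mathcal{A}_s$ over $\mathcal{D}$.

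For identical distribution, the shifted process $u\mapsto X_{s+u}-X_s$ has the same $\mathcal{D}$-valued cumulants as $X_u$: it is semicircular with variance $u\,\sigma_\epsilon$. Its joint $\mathcal{D}$-valued moments are therefore identical. $V_{s+u,s}$ is the same functional of the shifted process that $U_u$ is of $X$, so $V_{s+u,s}$ and $U_u$ have identical $\mathcal{D}$-valued moments.
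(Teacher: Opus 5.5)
Your proof is correct. Part (ii) is exactly the paper's argument: $V_{t;s}$ solves the same left-multiplicative SDE started at $1$, so it is built from $\mathcal{D}$ and the increments $X_u-X_s$, $u>s$. It is therefore $\mathcal{D}$-free from $\mathcal{A}_s$ and distributed as $U_{t-s}$.

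For part (i) you take a slightly different route. The paper presupposes a two-sided inverse $U_t^{-1}$. It derives $\dd U_t^{-1}=U_t^{-1}(-i\dd X_t-\tfrac{\sigma_\epsilon(1)}{2}\dd t)$ from $\dd(U_tU_t^{-1})=0$, and then identifies $U_t^{-1}$ with $U_t^*$ because both solve the same SDE from $1$. You instead check $U_t^*U_t=1$ and $U_tU_t^*=1$ directly. This has the small merit of not assuming the inverse exists; both arguments still rest on a uniqueness statement for a linear free SDE.

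One slip to fix. The drift in $\dd(U_t^*U_t)$ is $-U_t^*\sigma_\epsilon(1)U_t\,\dd t$, not $-\tfrac12 U_t^*\{\sigma_\epsilon(1),U_t\}\dd t$. Both factors $\sigma_\epsilon(1)$ sit between $U_t^*$ and $U_t$, so there is no commutation issue at all. Together with the It\^o term $U_t^*\sigma_\epsilon(\EE^\mathcal{D}[1])U_t\,\dd t$, this gives $\dd(U_t^*U_t)=0$ identically, as your last paragraph correctly states.

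Your stated reason for switching to $P_t=U_tU_t^*$ is therefore backwards. $U_t^*U_t$ is the product that needs no uniqueness argument. $U_tU_t^*$ is the one that does, because its It\^o term is $\sigma_\epsilon(\EE^\mathcal{D}[P_t])$ rather than $\sigma_\epsilon(1)$. You do need both products, since in a general $*$-algebra an isometry need not be unitary.
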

\begin{proof}
The proof is as in the unconditioned case \cite{biane1426833free}. By definition $U_t^{-1}$ is such that $U_t U_t^{-1}= U_t^{-1}U_t=1$. Using $0=\dd (U_tU^{-1}_t)=(\dd U_t)U^{-1}_t+U_t(\dd U^{-1}_t)+(\dd U_t)(\dd U^{-1}_t)$, or similarly $0=\dd (U^{-1}_tU_t)$, and the free It\^o rules, we get $\dd U^{-1}_t=U^{-1}_t (-i\dd X_t -\frac{\sigma_\epsilon(1)}{2}\dd t)$. In parallel, we have $\dd U_t^* = U_t^*(-i \dd X_t-\frac{\sigma_\epsilon(1)}{2}\dd t)$, since $X_t^*=X_t$. Hence $U_t^*=U_t^{-1}$ and $U_t$ is unitary.

Let $V_{t;s}:=U_tU_s^{-1}$ be the increment from $s$ to $t>s$. It satisfies $\dd V_{t;s}= (i\dd X_t -\frac{\sigma_\epsilon(1)}{2}\dd t)V_{t;s}$ with $V_{t;s}\vert_{t=s}=1$. Thus, $V_{t;s}$ is a function of the increments $X_{u;s}=X_u-X_s$, with $u>s$, and hence free w.r.t. $\mathcal{A}_s$. Since they satisfy the same SDE, with identical initial condition, $V_{t;s}$ and $U_{t-s}$ have identical distribution.
\end{proof}

\begin{definition}
Flows on adjoint orbits with free increments are defined by $t\to\phi_t:=U_t\phi_0 U^{-1}_t\in\mathcal{A}_t$, with $U_t$ as in equation \eqref{eq:def-Ut}, for some initial value $\phi_0\in \mathcal{D}$. 
\end{definition}

Under an infinitesimal time increment, $\phi_t$ transforms into $\phi_{t+\dd t}= e^{i\dd X_t} \phi_t e^{-i \dd X_t}$. It thus satisfies the free SDE:
\beq \label{eq:flow-unitary}
\dd\phi_t = i [\dd X_t, \phi_t] +\sigma_\epsilon(\EE^\mathcal{D}[\phi_t])\, \dd t  -\frac{1}{2}(\sigma_\epsilon(1)\phi_t+\phi_t\sigma_\epsilon(1))\dd t .
\eeq
This follows from the explicit expressions for $\dd U_t$ and $\dd U^{-1}_t$ and the free It\^o rules.
This formula is identical to $\dd \phi_t = i [\dd X_t, \phi_t] - \frac{1}{2} [\dd X_t,[\dd X_t,\phi_t]]$.

\subsection{Average dynamics and Lindbladian}

\noindent
$\bullet$ \underline{Average dynamics and Lindbladian.}
The mean dynamics $\phi_0\to \bar{\phi_t}:= \EE^\mathcal{D}[\phi_t] \in \mathcal{D}$ satisfies:
\beq \label{eq:mean-flow}
\dd \bar{\phi_t} = L_\epsilon(\bar{\phi_t})\, \dd t,
\eeq
with $L_\epsilon$ defined as
\beq \label{eq:Lindblad-phi}
L_\epsilon(\Delta) := \sigma_\epsilon(\Delta)  -\frac{1}{2}(\sigma_\epsilon(1)\Delta+\Delta\sigma_\epsilon(1)) .
\eeq

For any CP-map $\sigma_\epsilon$, the map $L_\epsilon$  defined in \eqref{eq:Lindblad-phi} is a Lindbladian on $\mathcal{D}$, i.e. it is the generator of a one-parameter semi-group of CP-maps on $\mathcal{D}$. See e.g \cite{lindblad1976generators} or M. Wolf's lecture notes (Theorem 7.2) \cite{wolf2012quantum}. By construction, $L_\epsilon(1)=0$.

Hence, equation \eqref{eq:mean-flow} is a well-posed evolution equation on $\mathcal{D}$.
Its interpretation is simple, in the spirit of the Stinespring's theorem, see \cite{wolf2012quantum}: The algebra generated by the variables $X_t$ can be viewed as that of an external bath interacting with a system whose algebra of observables is $\mathcal{D}$, and the equation \eqref{eq:Lindblad-phi} is the reduced dynamics once the bath degrees of freedom have been integrated out via the conditioned measure $\EE^\mathcal{D}$.

\medskip

\noindent
$\bullet$ \underline{Lindbladian's notation.} For later use, we need to introduce a few simple notations concerning Lindbladian related maps.

Any CP-map $\sigma_\epsilon$ can be represented as $\sigma_\epsilon(\Delta)=\sum_r K_r\Delta K_r^*$, for any $\Delta\in \mathcal{D}$, with $K_r$ so-called Kraus operators on $\mathcal{D}$, see  \cite{wolf2012quantum}. In particular, $\sigma_\epsilon(1)=\sum_r K_r  K_r^*$. As a consequence, we can write the Lindbladian as,
\[
L_\epsilon(\Delta)=\sum_r K_r\Delta K_r^* - \frac{1}{2}(K_r K_r^* \Delta+\Delta K_r K_r^*), 
\]
which is Lindblad's representation of the generators of one-parameter semi-group of CP-maps \cite{lindblad1976generators}. Alternatively, we may write $L_\epsilon(\Delta) = \frac{1}{2} \sum_r \left({ D_r(\Delta) K_r^* + K_r \bar D_r(\Delta) }\right)$ with $D_r$ and $\bar D_r$ derivations on $\mathcal{D}$ defined by $D_r(\Delta)=[K_r,\Delta]$ and $\bar D_r(\Delta)=[\Delta,K_r^*]$. Note that $D_r(\Delta)^*= \bar D_r(\Delta^*)$. 

As well know, a Lindbladian is not a derivation, i.e. it does not satisfy the Leibnitz rule, but we have, for $\Delta_1,\Delta_2\in\mathcal{D}$,
\beq \label{eq:L-doubleD}
L_\epsilon(\Delta_1\Delta_2) = L_\epsilon(\Delta_1)\Delta_2 + \Delta_1 L_\epsilon(\Delta_2)  + \sum_r D_r(\Delta_1)\bar D_r(\Delta_2).
\eeq

For later use, we introduce a dressed version of the Lindbladian $\mathfrak{L}_\epsilon(\Delta;d_1,d_2) $ defined by,  for $\Delta,d_1,d_2\in\mathcal{D}$,
\beq \label{eq:Ldressed}
\mathfrak{L}_\epsilon(\Delta;d_1,d_2) := \frac{1}{2}\Big({ L_\epsilon(d_1\Delta)d_2 -  L_\epsilon(d_1)\Delta d_2+ d_1 L_\epsilon(\Delta d_2)  - d_1 \Delta L_\epsilon(d_2) }\Big) .
\eeq
By construction, $\mathfrak{L}_\epsilon(\Delta;1,1)=L_\epsilon(\Delta)$ and $\mathfrak{L}_\epsilon(1;d_1,d_2)=0$. It may be expressed in terms of Kraus operators
$\mathfrak{L}_\epsilon(\Delta; d_1,d_2) = d_2L_\epsilon(\Delta)d_2 + \frac{1}{2} \sum_r \left({ D_r(d_1)\bar D_r(\Delta) d_2 + d_1 D_r(\Delta) \bar D_r(d_2)}\right)$. 

We introduce yet another operator $\mathfrak{D}_\epsilon(d_1,d_2;\Delta)$ defined by, for $\Delta,d_1,d_2\in\mathcal{D}$,
\beq \label{eq:Diff-dressed}
\mathfrak{D}_\epsilon(d_1,d_2;\Delta) := \frac{1}{2} \Big({ L_\epsilon(d_1\Delta d_2) -  d_1 L_\epsilon(\Delta d_2) - L_\epsilon(d_1\Delta) d_2 + d_1L_\epsilon(\Delta) d_2 }\Big) .
\eeq
It can be written as
$\mathfrak{D}_\epsilon(d_1,d_2;\Delta) = \frac{1}{2}\sum_r D_r(d_1)\Delta \bar D_r(d_2)$,
so that it corresponds to a dressed version of the defect to the Leibnitz rules \eqref{eq:L-doubleD}.
By construction $\mathfrak{D}_\epsilon(1,1;\Delta) =0$.

\begin{remark} \label{remark:L-simple}
These operators simplify when the Kraus operators are self-adjoint. In this case, we write $L_\epsilon(\Delta)=\sum_r [\ell_r,[\Delta,\ell_r]]$, with $\ell_r^*=\ell_r$, and $K_r=K_r^*=\sqrt{2}\ell_r$, or equivalently,
\[
L_\epsilon(\Delta)=-\sum_r \delta_r\delta_r(\Delta),
\]
 with derivatives $\delta_r(\Delta):=[\ell_r,\Delta]$ and $\bar \delta_r(\Delta):=-\delta_r(\Delta)$. In such case, it is clear that $L_\epsilon$ is the non-commutative analogue of a Laplacian.
 We have $L_\epsilon(\Delta_1\Delta_2)=\Delta_1L_\epsilon(\Delta_2)+L_\epsilon(\Delta_1)\Delta_2-2 \sum_r \delta_r(\Delta_1)\delta_r(\Delta_2)$ and 
\begin{align*}
 \mathfrak{L}_\epsilon(\Delta\,;\, d_1.d_2)& = -  \sum_r \delta_r(d_1 \delta_r(\Delta)d_2),\\
 \mathfrak{D}_\epsilon(d_1,d_2;\Delta) & = -\sum_r \delta_r(d_1)\Delta \delta_r(d_2) .
 \end{align*}
\end{remark}

\subsection{Moment dynamics}

The $\mathcal{D}$-valued moments $C_{p+1}^t$, with $p\geq 0$, are defined as,
\beq \label{eq:def-moments}
C_{p+1}^t[\Delta_1,\cdots,\Delta_{p}] :=\EE^\mathcal{D}[\phi_t\Delta_1 \phi_t\cdots \Delta_{p} \phi_t] ,
\eeq
with $\Delta_k\in \mathcal{D}$. There are $p+1$ insertions of $\phi_t$ in $C_{p+1}^t$, hence the index, and $C_1^t=\EE^\mathcal{D}[\phi_t]=:\bar \phi_t$ is the mean.  By construction, they belong to $\mathcal{D}$ and depend linearly on $\Delta_k$'s. Their time evolutions are encoded in the following:

\begin{proposition} \label{prop:eq-moments-general}
The $\mathcal{D}$-valued moments are solutions of a closed, triangular, hierarchy of evolution equations which read: 
\begin{align} \label{eq:C-general}
 \partial_t  C^{t} _{p+1}[\Delta_1,\cdots,\Delta_{p}] &= L_\epsilon(C^{t} _{p+1}[\Delta_1,\cdots,\Delta_{p}]) 
+   \sum_{j=1}^{p}  C^t_{p+1}[\Delta_1,\cdots, L_\epsilon( \Delta_j),\cdots, \Delta_{p}] \nonumber \\
&  \hskip -0.6 truecm -2 \sum_{j=1}^{p} \mathfrak{L}_\epsilon(\Delta_j\, ;\, C^t_j[\Delta_1,\cdots,\Delta_{j-1}],C^t_{p+1-j}[\Delta_{j+1},\cdots,\Delta_{p}]) \\
& \hskip -1.7 truecm + 2\sum_{j<k}^{p} C^t_{p+1-k+j}[\Delta_1,\cdots, \Delta_{j-1}, \mathfrak{D}_\epsilon(\Delta_j, \Delta_k; C^t_{k-j}[\Delta_{j+1},\cdots, \Delta_{k-1}]) , \Delta_{k+1},\cdots, \Delta_{p}] ,\nonumber   
\end{align}
 with dressed Lindbladian $\mathfrak{L}_\epsilon(\Delta\, ; \,d_1,d_2)$ and differential operator $\mathfrak{D}_\epsilon(d_1,d_2;\Delta)$ defined in \eqref{eq:Ldressed} and \eqref{eq:Diff-dressed}.
\end{proposition}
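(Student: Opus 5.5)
The plan is to differentiate the product $\phi_t\Delta_1\phi_t\cdots\Delta_p\phi_t$ with the free It\^o calculus and then take $\EE^\mathcal{D}$. The starting point is the free SDE \eqref{eq:flow-unitary}, which I write as
\[
\dd\phi_t=i[\dd X_t,\phi_t]+L_\epsilon^\phi\,\dd t,\qquad L_\epsilon^\phi:=\sigma_\epsilon(\bar\phi_t)-\tfrac12(\sigma_\epsilon(1)\phi_t+\phi_t\sigma_\epsilon(1)).
\]
The Leibniz rule for the $(p+1)$-fold product, extended through the It\^o rules \eqref{eq:Ito-sigma}, produces three kinds of terms.

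\emph{First-order terms.} In these, a single $\dd\phi_t$ replaces one factor $\phi_t$. The martingale parts $i[\dd X_t,\phi_t]$ have zero conditioned expectation, because $\dd X_t$ is $\mathcal{D}$-free from $\mathcal{A}_t$ and centred. What survives is the drift $L^\phi_\epsilon$ inserted at each of the $p+1$ positions. Its term $\sigma_\epsilon(\bar\phi_t)=\sigma_\epsilon(C_1^t)$ splits the moment into a product $C_j^t[\cdots]\Delta_{j}'\cdots$. The terms involving $\sigma_\epsilon(1)$ sit next to the $\Delta_j$'s.

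\emph{Second-order terms.} These come from pairs $(\dd\phi_t)$ at positions $a<b$. Each pair expands into four terms $\pm\,\dd X_t\,\phi_t$ or $\pm\,\phi_t\,\dd X_t$. I contract them with \eqref{eq:Ito-sigma}, which places $\sigma_\epsilon(\EE^\mathcal{D}[\cdots])$ of the intermediate string in the middle. Concretely, the $\dd X_t$ from position $a$ and the one from position $b$ enclose a substring, e.g.\ $\Delta_a\phi_t\Delta_{a+1}\cdots\phi_t\Delta_b$, possibly with the end $\phi$'s included. Its conditioned expectation is either a lower moment $C^t_{b-a}$ dressed by $\Delta_a,\Delta_b$, or $C^t$ multiplied by $\Delta$'s.

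\emph{Reorganisation and closedness.} I then write every $\sigma_\epsilon(\cdot)$ as $L_\epsilon(\cdot)+\tfrac12\{\sigma_\epsilon(1),\cdot\}$ and regroup. The anticommutator pieces have to cancel against the $\sigma_\epsilon(1)$ drift terms and against each other, leaving only combinations of $L_\epsilon$. I organise these combinations according to the outcome of the contraction:
\begin{itemize}
\item[(i)] The whole-product term $L_\epsilon(C_{p+1}^t)$ arises from contractions at positions $1$ and $p+1$ together with the first-order drift.
\item[(ii)] $L_\epsilon(\Delta_j)$ arises from adjacent pairs $(j,j+1)$ whose contraction straddles $\Delta_j$ only, where $\EE^\mathcal{D}$ of the intermediate $\Delta_j$ is $\Delta_j$ itself.
\item[(iii)] The $\mathfrak L_\epsilon$ terms arise when the contraction separates the product into the left block $C_j^t[\Delta_1,\ldots,\Delta_{j-1}]$ and the right block $C_{p+1-j}^t[\ldots]$ around $\Delta_j$.
\item[(iv)] The $\mathfrak D_\epsilon$ terms arise from non-adjacent pairs $j<k$, where the enclosed moment $C_{k-j}^t$ is sandwiched between $\Delta_j$ and $\Delta_k$.
\end{itemize}
The combinations \eqref{eq:Ldressed} and \eqref{eq:Diff-dressed} are exactly the four-term patterns $\sigma(d_1\Delta d_2)-\ldots$ that arise from the four sign choices $\pm$ of the commutators. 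This step needs the conditional property \eqref{eq:conditioned-measure}, $\EE^\mathcal{D}[\Delta_1a\Delta_2]=\Delta_1\EE^\mathcal{D}[a]\Delta_2$, to pull $\Delta$'s out of the enclosed substring. Triangularity is then manifest: apart from the first two terms, the right-hand side involves only moments of order at most $p$, while the first two terms are linear in $C^t_{p+1}$ through the well-defined maps $L_\epsilon$.

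The main obstacle is the bookkeeping of this cancellation. I must check that the $\sigma_\epsilon(1)$ contributions assemble precisely into Lindbladians, and that the pieces of boundary pairs and adjacent pairs redistribute correctly between the $L_\epsilon$, $\mathfrak L_\epsilon$ and $\mathfrak D_\epsilon$ terms, with the factors $\pm2$. I will first verify the cases $p=0$ (which gives \eqref{eq:mean-flow}) and $p=1$ explicitly to fix the combinatorics. Then I will do the general case by classifying pairs $(a,b)$ of $\phi$-positions according to the adjacency of the $\Delta$'s they enclose, using the Kraus form $\mathfrak D_\epsilon=\tfrac12\sum_rD_r(d_1)\Delta\bar D_r(d_2)$ as a cross-check.
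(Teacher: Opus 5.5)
Your argument is correct, but it organises the computation differently from the paper. You apply the It\^o product rule to the $p+1$ factors $\phi_t$, so the noise sources are the $\phi$'s. Every pair of $\phi$-positions (with its four sign choices), and every self-contraction hidden in the drift of \eqref{eq:flow-unitary}, then has to be redistributed among the four structures of \eqref{eq:C-general}.

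The paper avoids this redistribution by writing
\[
\phi_{t+\dd t}\Delta_1\phi_{t+\dd t}\cdots\Delta_p\phi_{t+\dd t}=e^{i\dd X_t}\big(\phi_t\,(e^{-i\dd X_t}\Delta_1e^{i\dd X_t})\,\phi_t\cdots(e^{-i\dd X_t}\Delta_pe^{i\dd X_t})\,\phi_t\big)e^{-i\dd X_t}.
\]
This moves the increment onto the $\Delta_j$'s plus one global conjugation. At first order it is the telescoping identity $\sum_a(\cdots i[\dd X_t,\phi_t]\cdots)=i[\dd X_t,\Pi_t]-i\sum_j(\cdots[\dd X_t,\Delta_j]\cdots)$, with $\Pi_t$ the full product. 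There are then only four kinds of contractions: global/global, $\Delta_j$/$\Delta_j$, global/$\Delta_j$, and $\Delta_j$/$\Delta_k$. Each gives exactly one line of \eqref{eq:C-general}. Within each elementary identity, the $\sigma_\epsilon(1)$ pieces either complete $L_\epsilon$ (first two kinds) or cancel (last two kinds). So the bookkeeping you flag as the main obstacle is exactly what this trick removes. In exchange, your version works directly from the SDE without invoking the orbit form.

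If you keep your organisation, classify contractions by the slot where each $\dd X_t$ lands (just left or right of some $\Delta_j$, or at an outer end), not by pairs of $\phi$-positions. You also need to correct some of your tentative attributions:
\begin{itemize}
\item The drift $\sigma_\epsilon(\bar\phi_t)$ at an interior position does not split the moment into a product. It gives $C^t_p[\ldots,\Delta_a\sigma_\epsilon(C^t_1)\Delta_{a+1},\ldots]$, a piece of $\mathfrak{D}_\epsilon(\Delta_a,\Delta_{a+1};C^t_1)$. Only at the two ends does it give a product and feed $\mathfrak{L}_\epsilon$.
\item An adjacent pair of $\phi$'s feeds $L_\epsilon(\Delta_j)$ only through the choice with $\dd X_t$ to the right of the first factor and to the left of the second. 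Its other three choices go to $\mathfrak{D}_\epsilon$, $\mathfrak{L}_\epsilon$ or the global term.
\item The $\mathfrak{D}_\epsilon$ sector does include adjacent $\Delta$'s ($k=j+1$, with $C^t_1$ sandwiched), contrary to your ``non-adjacent pairs'' in (iv).
\item The $\sigma_\epsilon(1)$ parts in the $\mathfrak{L}_\epsilon$ and $\mathfrak{D}_\epsilon$ sectors cancel only after you collect the four terms of one slot pair. These come from different $\phi$-pairs or self-contractions, so they do not cancel pair by pair.
\end{itemize}
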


The proof is given in Appendix \ref{app:proof-eq-of-motion}, and is based on non-commutative It\^o calculus.

The closure property of this hierarchy originates in the freeness of the increments. Freeness also echoes in the non-linear terms in the second and third line of \eqref{eq:C-general}. Note that this hierarchy of equations involve only the Lindbladian $L_\epsilon$ and not directly the CP-map $\sigma_\epsilon$. This offers some flexibility in the definition of this CP-map, which we shall later use to regularize QSSEP in the continuum.

\begin{remark}
Given the $\mathcal{D}$-valued moments $C^t_{p+1}$ we can define the $\mathcal{D}$-valued free cumulants $\kappa_\pi$, as in \eqref{eq:def-cumulants}. The evolution equations \eqref{eq:C-general} then transfer into a non-linear, triangular, hierarchy of equations of motion for the elementary free cumulants $\kappa_{p+1}^t$. We shall exemplify it below in the case of QSSEP.
\end{remark}

\begin{remark}
The Lindblad evolution for the first moment is $\partial_t C_1^t=L_\epsilon(C_1^t)$, see \eqref{eq:mean-flow}.
For the second and third moments, $C_2^t[\Delta]$ and $C_3^t[\Delta_1,\Delta_2]$, we have: 
\begin{align*}
\partial_t  C_2^t[\Delta] &= L_\epsilon(C^{t} _{2}[\Delta]) +   C^t_{2}[L_\epsilon( \Delta)] - 2 \mathfrak{L}_\epsilon(\Delta;\, C^t_1,C^t_1) ,\\
\partial_t C_3^t[\Delta_1,\Delta_2] =& ~ L_\epsilon(C^{t} _{3}[\Delta_1,\Delta_2]) +   C^t_{3}[L_\epsilon( \Delta_1),\Delta_2] + C^t_{3}[\Delta_1,L_\epsilon( \Delta_2)] \\
& - 2 \mathfrak{L}_\epsilon(\Delta_1;\, C^t_1,C^t_2[\Delta_2]) - 2 \mathfrak{L}_\epsilon(\Delta_2;\, C^t_2[\Delta_1],C^t_1) + 2 C_2^t[\mathfrak{D}_\epsilon(\Delta_1,\Delta_2;C_1^t)] .
\end{align*}
And so on for higher moments. 
For $\overline{\phi_t^{p}}:=C_{p}^t[1,\cdots,1]$, we simply have $\partial_t \overline{\phi_t^{p}}= L_\epsilon(\overline{\phi_t^{p}})$.
\end{remark}

\section{QSSEP in the continuum}
\label{sec:qssep-continuum}

Although the above construction has the potential to have other applications, we now apply it to define QSSEP in the continuum.

\subsection{QSSEP: Definition in the continuum}
\label{sec:def-qssep-more}

We first present a slightly imprecise, but more flexible definition. A more precise setup, defining properly a regularization scheme ensuring positivity and the proper boundary conditions is discussed in the following Section \ref{sec:reg-bdry}. 
\medskip

\noindent
$\bullet$ \underline{\it Setup and definition.}
The QSSEP in the continuum is defined as a limit of a regularized version, which is needed to ensure positivity conditions required by the general framework of conditioned free processes. We denote by $\epsilon$ the small regularizing parameter ($\epsilon\to 0$). 

We take $\mathcal{D}=L_\infty[0,1]$, the space of bounded functions on the interval $[0,1]$. We denote by $\partial:=\partial_x$ the differential operator on (smooth enough) functions on $[0,1]$. 
We follow Section \ref{sec:unitary-orbits} to define adjoint orbits with free increments conditioned over $L_\infty[0,1]$.  We let $X_t$ be a free Brownian motion with $L_\infty[0,1]$-conditioned variance $\EE^\mathcal{D}[\dd X_t\Delta\dd X_t]=\sigma_\epsilon(\Delta)\, \dd t$, and denote the associated Lindbladian by $L_\epsilon(\Delta)=\sigma_\epsilon(\Delta)-\frac{1}{2}(\sigma_\epsilon(1)\Delta+\Delta\sigma_\epsilon(1))$.


\begin{definition} \label{def:qssep-continuum}
In the continuum, QSSEP is the limit $\epsilon\to0$ of the process $t\to\phi_t$, conditioned on $\mathcal{D}=L_\infty[0,1]$, solution of the free stochastic differential equation,
\beq \label{eq:flow-unitary-bis}
\dd\phi_t = i [\dd X_t, \phi_t] +\sigma_\epsilon(\EE^\mathcal{D}[\phi_t])\, \dd t  -\frac{1}{2}(\sigma_\epsilon(1)\phi_t+\phi_t\sigma_\epsilon(1))\dd t + \mathrm{boundary\ conditions},
\eeq
with $X_t$ a free Brownian motion, conditioned over $L_\infty[0,1]$, such that $L_\epsilon(\Delta)=\partial^2\Delta + O(\epsilon)$. Different boundary conditions specify the different variants of QSSEP.
\end{definition}

In the "periodic" and "closed" case, the flows is unitary and there is no boundary term in \eqref{eq:flow-unitary-bis}. In the "open" case, the regularization involves adding unitary-breaking boundary terms in \eqref{eq:flow-unitary-bis} at finite $\epsilon$, which enforce relaxation toward the proper boundary conditions in the limit $\epsilon\to0$. These terms are echoes of the injection-extraction boundary processes present in the discrete setting which stabilize the particle density at the boundaries. See Section \ref{sec:reg-bdry}.
\medskip

\noindent
$\bullet$ \underline{\it Boundary conditions.}
Boundary conditions are clearly needed in view of the heat equation satisfied by the one point function,  $\partial_t C_1^t=\partial^2C_1^t$, see \eqref{eq:mean-flow}. To discuss them, let us introduce the $\mathcal{D}$-valued moments $C_{p+1}^t$, as in \eqref{eq:def-moments}. They are functions on $[0,1]$,
\[
C_{p+1}^t[\Delta_1,\cdots,\Delta_p](x):=\EE^\mathbb{D}[\phi_t\Delta_1\phi_t\cdots\Delta_p\phi_t](x),\quad x\in[0,1] ,
\]
 with $\Delta_k$ functions on $[0,1]$. By linearity, we write
 \beq \label{eq:phi-moments}
 C_{p+1}^t[\Delta_1,\cdots,\Delta_p](x) =:\int_0^1\!\!(\dd x_1\cdots\dd x_p)\Delta_1(x_1)\cdots\Delta_p(x_p)\, \mathfrak{g}^t_{p+1}(x,x_1,\cdots,x_p) ,
\eeq
for some function $ \mathfrak{g}^t_{p+1}$, called local moments.  We assume that all $ \mathfrak{g}^t_{p+1}$ are invariant under cyclic permutations of their arguments (in accordance with their discrete interpretation).
 
 Using the moment-cumulant formula \eqref{eq:def-cumulants}, the moments $C_{p+1}^t$ can be written in terms of their free cumulants $\kappa^t_\pi[\Delta_1,\cdots,\Delta_p]$, indexed by non-crossing partitions. Let $\kappa^t_{p+1}:=\kappa^t_{1_{p+1}}$ be the free cumulant associated to the single block partition. By linearity in $\Delta$'s, we write
 \beq \label{eq:kappa-gp}
 \kappa^t_{p+1}[\Delta_1,\cdots,\Delta_p](x) =:\int_0^1\!(\dd x_1\cdots\dd x_p) \Delta_1(x_1)\cdots\Delta_p(x_p)\, g^t_{p+1}(x,x_1,\cdots,x_p),
 \eeq
 for some functions $g^t_p$, called cyclic loop expectation values in \cite{bernard2019open} and local free cumulants in \cite{Bernard_structured2024}. In particular $g_1^t=\kappa_1^t=C_1^t$. Since all free cumulants $\kappa^t_\pi$ can be written in terms of the $\kappa^t_q$'s, in a nested way \cite{speicher1998combinatorial}, they all can be reconstructed from the $g^t_q$. 
 
 The boundary conditions are then:
 \begin{itemize} \label{item:boundary}
 \item[(i)]  \underline{"Periodic QSSEP":} All functions are periodic on $[0,1]$. This includes all local free cumulants $g_{p}^t$, in all of its arguments, and hence all moments $C_{p}^t$. As in the discrete, the evolution is unitary and it is specified by $\phi_0$, a periodic function on $[0,1]$. 
 \item[(ii)] \underline{"Closed QSSEP":} The boundary conditions are the Neumann boundary conditions: the derivatives of the functions vanish at the boundary points $x=0,1$. This applies to all local free cumulants $g_{p}^t$ and local moments $ \mathfrak{g}_{p}^t$. In the discrete, the evolution is also unitary and associated to Neumann boundary conditions, see Appendix \ref{app:discrete-bdry}. In the continuum, the orbit is similarly characterized by a function $\phi_0$ with Neumann boundary conditions. 
 \item[(iii)] \underline{"Open QSSEP":} The systems is specified by two boundary densities $n_a$ ($n_b$) respectively at $x=0$ ($x=1)$. In terms of the local free cumulants, the boundary conditions read \cite{bernard2019open}:
 \begin{align} \label{eq:bdry-open}
 g_1^t(x)\vert_{x=0,1}&= n_{a,b}, \\
 \forall p\geq2, \ g_p^t(x_1,\cdots,x_p)&=0,\ \mathrm{if}\ \exists j\ \mathrm{s.t.}\ x_j=0,1.\nonumber
  \end{align}
 Equivalently, in terms of the local moments, the $x=0$ boundary condition reads,  
 \beq \label{eq:frak-g-bdry}
 \forall p\geq1, \ \mathfrak{g}_p^t(x_1,\cdots,x_p)\big\vert_{\exists j;\, x_j=0}= n_a^{p+1}\, \prod_{k\not= j} \delta(x_k) ,
 \eeq
 and similarly near $x=1$.
In particular,  $C_{p+1}^t[\Delta_1,\cdots,\Delta_p]\vert_{x=0,1}=n_{a,b}^{p+1}\, (\Delta_1\cdots\Delta_p)\vert_{x=0,1}$.
These boundary conditions break the unitarity. See following Section \ref{sec:reg-bdry}.
 \end{itemize}
 
 In physical terms, the two first cases correspond to equilibrium situations, while in the third case the system is driven out-of-equilibrium by the boundary conditions.
 
\begin{remark}
 The local moments $\mathfrak{g}^t_q$, in \eqref{eq:phi-moments}, should not be confused with the local free cumulants $g_q^t$, in \eqref{eq:kappa-gp}, although they are of course related  \cite{hruza2022coherent}:
\[
\mathfrak{g}_p^t(x) = \sum_{\pi\in NC_p} g_\pi^t(x)\, \delta_{\pi^*}(x),
\]
where the sum is on non-crossing partitions, with $g^t_\pi(x)=\prod_{b\in\pi}g^t_{|b|}(x)$ as usual,  $\pi^*$ the Kreweras dual of $\pi$,  and $\delta_\pi^*$ the Dirac function enforcing the points in the block of $\pi^*$ to be equal.
 \end{remark}
 
\medskip

\noindent
$\bullet$ \underline{\it Positivity and regularization.} As we are going to see below, requiring that the CP-map $\sigma_\epsilon$ be proportional to the Laplacian (up to an additive constant) ensures that the equations of motion for moments of $\phi_t$ coincide with the scaling limit of those of the discrete QSSEP. But the Laplacian is not a positive map and cannot be used to define the $\mathcal{D}$-valued variance of the Brownian motion. Some regularization is needed to ensure positivity.

Introducing a small regularizing parameter $\epsilon$, we take 
\beq \label{eq:sigma-qssep}
\sigma_\epsilon(\Delta) := \epsilon^{-1}\, G_\epsilon(\Delta) ,
\eeq
with $G_\epsilon$ the heat kernel at `time' $\epsilon$, i.e. solution of the heat equation $(\partial_\epsilon -\partial^2)G_\epsilon=0$, with $G_0=\mathrm{id}$, and some specific boundary conditions (say, periodic, Neumann, or Dirichlet, etc) which we shall discuss below. Those boundary conditions specify the domain of $G_\epsilon$. On that domain, we formally have $G_\epsilon=\exp{\epsilon \partial^2}$. 

By construction, the map $\sigma_\epsilon$ is completely positive on $L_\infty[0,1]$. The associated Lindbladian is $L_\epsilon(\Delta) := \sigma_\epsilon(\Delta)-\frac{1}{2}(\sigma_\epsilon(1)\Delta+\Delta\sigma_\epsilon(1))$. On the domain of $G_\epsilon$, we have:
\beq \label{eq:L-qssep}
L_\epsilon(\Delta) = {\epsilon}^{-1}\left({e^{\epsilon \partial^2}-1}\right)\Delta = \partial^2\Delta + O(\epsilon).
\eeq
Alternatively, we can think about the CP-map $\sigma_\epsilon$ as $\sigma_\epsilon(\Delta) = \frac{\Delta}{\epsilon} + \partial^2\Delta + O(\epsilon)$.
Of course, we could take any other choice as long as $L_\epsilon(\Delta)= \partial^2\Delta + O(\epsilon)$. For instance, we might translate the CP-map, $\sigma_\epsilon(\Delta) \leadsto c \Delta + \sigma_\epsilon(\Delta)$, for any positive function $c\in L_\infty[0,1]$, without affecting the Lindbladian (and hence the rest of the construction).

\begin{remark} \label{remark:conservation}
The flow \eqref{eq:flow-unitary-bis} without boundary terms is unitary, as it corresponds to $t\to\phi_t=U_t\phi_0 U_t^{-1}$ on an adjoint orbit of $U_t$. In such case, the global moments $\tau^t_{p}:=\int_0^1\! \dd x\, \overline{\phi_t^p}(x)$ with $\overline{\phi_t^p}:=\EE^\mathcal{D}[\phi_t^{p}]$ are conserved.  Indeed, as pointed out in Section \ref{sec:unitary-orbits}, $\partial_t \overline{\phi_t^p}= L_\epsilon(\overline{\phi_t^p})$, if the flow is defined by \eqref{eq:flow-unitary}.  In the limit $\epsilon\to 0$, this becomes $\partial_t \overline{\phi_t^p}= \partial^2 \overline{\phi_t^p}$. The above boundary conditions then ensure that the global moments are conserved in the "periodic" and "closed" case (they parametrize the orbit), but not in the "open" case. At finite $\epsilon$, for $L_\epsilon(f)=\epsilon^{-1}(G_\epsilon(f)-G_\epsilon(1)f)$, and if the flow is unitary, we have,
\[
\partial_t\tau^t_{p} = \int_0^1\!\dd x\, L_\epsilon(\overline{\phi_t^p})(x)= \epsilon^{-1}\! \int_0^1\!\dd x\dd y\, G_\epsilon(x,y)(\overline{\phi_t^p}(y)-\overline{\phi_t^p}(x))=0 , 
\]
for whatever choice of symmetric heat kernel. As a consequence, the global moments are conserved at finite $\epsilon$ in the "periodic" and "closed" cases, but not in the "open" case, since in the latter the regularization involves adding unitary-breaking processes. See Section \ref{sec:reg-bdry}.
\end{remark}

\begin{remark}
If a smooth enough function $f$ does not satisfy the proper boundary conditions, and hence is not in the domain of the heat kernel, we do not necessary have $G_\epsilon(f) = f + \epsilon \partial^2 f + O(\epsilon^2)$. However, when $\epsilon\to 0$, the difference between $\epsilon^{-1}(G_\epsilon-1)(f)$ and $\partial^2 f$ is essentially localized on a shell of width $\sqrt{\epsilon}$ close to the boundaries. Hence, as long as we are using the function $f$ to test points not strictly at the boundary this does not matter in the limit $\epsilon\to 0$. We are thus going to generically use the expansion \eqref{eq:L-qssep}, implicitly assuming that we are not testing points close to the boundaries by a distance smaller than $\sqrt{\epsilon}$. 
\end{remark}

\begin{remark}
The need to introduce a regularization opens the door for possible "anomalies" --using a wording from field theory--, in the sense that there could exist observables which naively vanish in the limit $\epsilon\to 0$ but actually don't as an echo of strong irregularities in the field or test functions. 
\end{remark}

\begin{remark}
It is clear that the above construction can be generalized to higher dimensions. For instance, we could replace the heat kernel $G_\epsilon$ by the semi-group kernel $P_\epsilon$ of a higher dimensional diffusion process and the Lindbladian $L_{\epsilon\to 0}$ by the second order differential operators --sometimes called Fokker-Planck or Dynkin operators-- generating that semi-group.
\end{remark}

\subsection{Regularization and boundary conditions}
\label{sec:reg-bdry}

The regularization prescription \eqref{eq:sigma-qssep} has to be compatible with the ("periodic", "closed", "open") boundary conditions, at least for $\epsilon\to 0$. This constrains the boundary conditions on the regularizing heat kernel $G_\epsilon$. Note that one should not confuse the regularizing kernel $G_\epsilon$ and the heat like equations for the moments as in \eqref{eq:mean-flow}.
\medskip

(i) \underline{"Periodic" case: Periodic boundary conditions.}
Since, in this case, all functions are periodic, the heat kernel $G_\epsilon$ is chosen with periodic boundary conditions (denoted $G_\epsilon^P$). To be specific:
\[
G_\epsilon(x,y)^\mathrm{Periodic} 
= \sum_{n\in \mathbb{Z}} e^{-\epsilon (2n\pi)^2}e^{i2n\pi(x-y)} .
\]
Its domain, i.e. the set of smooth periodic functions with period $1$, forms an algebra. The identity $1$ belongs to this domain (the constant function $1$ is indeed a periodic function). Thus $G^P_\epsilon(1) =1$ and $\sigma_\epsilon(1)=1/\epsilon$. In this case we could also choose this domain as the algebra over which we condition. This boundary condition is such that all expectation values, all moments, are periodic functions.
\medskip

(ii) \underline{"Closed" case: Neumann boundary conditions.}
This case is similar to the previous. To be compatible with the boundary conditions on the expectation values, the heat kernel $G_\epsilon$ is chosen with Neumann boundary conditions (denoted $G_\epsilon^N$). Namely,
\[
G_\epsilon(x,y)^\mathrm{Neumann}
= 2\sum_{n\geq 0}  e^{-\epsilon (n\pi)^2} \cos(n\pi x)\cos(n\pi y) .
\]
Its domain also forms an algebra, and the constant function $1$ belongs to it (as its derivative vanishes).  Thus $G^N_\epsilon(1) =1$ and $\sigma_\epsilon(1)=1/\epsilon$. We could also have chosen this domain as the algebra over which we condition. This choice ensures that all moments are functions with Neumann boundary conditions, after a time of order $\epsilon$ (i.e. at any time $t>0$ in the limit $\epsilon\to 0$).

Let us check this last statement on the mean $\bar \phi_t:=\EE^\mathcal{D}[\phi_t]$. At finite $\epsilon$, it satisfies $\partial_t \bar \phi_t = L_\epsilon(\bar \phi_t)$ with $L_\epsilon(f)= \epsilon^{-1} ( G^N_\epsilon(f)-G^N_\epsilon(1)f)$. By construction, the derivative $\partial G^N_\epsilon(f)$ vanishes at the boundaries. Thus $\partial L_\epsilon(f)(0)=- \epsilon^{-1} G^N_\epsilon(1)(0)\, \partial f(0)$ (and similarly at $x=1$). Since $1$ is in the domain, $G^N_\epsilon(1)(0)=1$, and $\partial L_\epsilon(f)(0)= - \epsilon^{-1} \, \partial f(0)$, so that
\[
\partial_t\, \partial \bar \phi_t(0)= L_\epsilon(\bar \phi_t)(0) =  - \epsilon^{-1} \,\partial \bar \phi_t(0) .
\]
Thus, $\partial \bar \phi_t(0)$ decreases exponentially fast to zero, in a time scale of order $\epsilon$.
\medskip

(iii) \underline{"Open" case: Dirichlet boundary conditions plus boundary terms.}
The boundary conditions \eqref{eq:bdry-open} do not guarantee the conservation of the global moments $\tau^t_{p}$. They thus reflect unitary-breaking boundary processes on top of the unitary flow \eqref{eq:flow-unitary}. See Remark \ref{remark:conservation}. If we identify $\epsilon$ with $1/N^2$ in the discrete, we should indeed expect such contributions at finite $\epsilon$. See Appendix \ref{app:discrete-bdry}. We define the process at finite $\epsilon$ by
\beq
\dd \phi_t = \dd \phi_t\big\vert_\mathrm{unit.} + \dd \phi_t\big\vert_\mathrm{bdry}, 
\eeq

The unitary part $\dd \phi_t\big\vert_\mathrm{unit.}$ is a unitary flow on the adjoint orbit, as in \eqref{eq:flow-unitary}. The simplest regularization consists in choosing the heat kernel $G_\epsilon$ with vanishing Dirichlet boundary conditions (denoted $G_\epsilon^D$). Namely,
\[
G_\epsilon(x,y)^\mathrm{Dirichlet} = 2\sum_{n> 0}  e^{-\epsilon (n\pi)^2} \sin(n\pi x)\sin(n\pi y) .
\]
Its domain is made of smooth functions vanishing at the boundaries. Although the function $1$ does not belong to this domain, we nevertheless can evaluate $G^D_\epsilon(1)$.  By construction, $G^D_\epsilon(1)$ is a bounded positive function on $[0,1]$, vanishing at the boundaries.  

The boundary term $\dd \phi_t\big\vert_\mathrm{bdry}$ is a drift, singular and localized near the boundaries, enforcing the proper boundary conditions \eqref{eq:bdry-open} or \eqref{eq:frak-g-bdry} in the limit $\epsilon\to 0$, via a relaxation mechanism. By analogy with the discrete process, see Appendix \ref{app:discrete-bdry}, we set:
\beq \label{eq:bdry-process}
\dd \phi_t\big\vert_\mathrm{bdry} := \sum_{\alpha=a,b} \frac{\nu_\alpha}{2\epsilon}\left({ \delta_\alpha^\epsilon\, (n_\alpha-\phi_t) + (n_\alpha-\phi_t)\,\delta_\alpha^\epsilon }\right) \dd t, 
\eeq
where $\alpha=a,b$ labels the boundaries at $x=0,1$ with $n_\alpha$ the boundary densities $n_{a,b}$ and $\nu_{a.b}$ some rate parameters. Here $\delta_{a,b}^\epsilon$ are functions localized near the boundaries when $\epsilon\to 0$. We choose $\delta_a^\epsilon(x):=\delta^\epsilon(x)$ and $\delta_b^\epsilon(x):=\delta^\epsilon(1-x)$, with $\delta^\epsilon$ some mollifier of the Dirac function. 

At the boundary, $ \dd \phi_t\big\vert_\mathrm{bdry}$ is the dominating process --we will show below that the contributions from $ \dd \phi_t\big\vert_\mathrm{unit.}$ are irrelevant near the boundary--, and its role is to stabilize the boundary densities, as in the discrete. For instance, looking at the average $\bar \phi_t:=\EE^\mathcal{D}[\phi_t]$, we have $\partial_t \bar \phi_t= L_\epsilon(\bar \phi_t) + \partial_t \bar \phi_t\vert_\mathrm{bdry}$, with $L_\epsilon(\bar \phi_t)$ vanishing at the boundary --see below--, and
\[
\partial_t \bar \phi_t\vert_\mathrm{bdry} = \frac{\nu_a}{\epsilon}\delta_a^\epsilon\, (n_a-\bar \phi_t) + \frac{\nu_b}{\epsilon}\delta_b^\epsilon\, (n_b-\bar \phi_t) ,
\]
which imposes the boundary condition $\bar \phi_t(0)=n_a$ and $\bar \phi_t(1)=n_b$, in the limit $\epsilon\to 0$, for whatever initial conditions. These boundary conditions hold after a time scale of order $\epsilon/\nu_{a}\delta_a^\epsilon(0)$, or $\epsilon/\nu_{b}\delta_b^\epsilon(1)$ respectively. They thus hold for any time $t>0$ in the limit $\epsilon\to 0$.

Let us now look at the higher moments $C_{p}^t$ and verify by recursion that the relaxation process \eqref{eq:bdry-process} ensures the boundary conditions \eqref{eq:frak-g-bdry} on the local moments $\mathfrak{g}_p^t$. Assume this to be true for all $q\leq p$. The contribution of the boundary process \eqref{eq:bdry-process} to $\partial_t C_{p+1}^t$ is (to simplify notation, we only write the term associated to the boundary $x=0$, and we use a slight abuse of notation for the terms with $j=0,p$)
\begin{align*}
\partial_t C_{p+1}^t\big\vert_\mathrm{bdry} &= \frac{\nu_a}{2\epsilon} \sum_{j=0}^p( \EE^\mathcal{D}[\phi_t\cdots \Delta_j\delta_a^\epsilon(n_a-\phi_t)\Delta_{j+1}\cdots \phi_t] +  \EE^\mathcal{D}[\phi_t\cdots \Delta_j(n_a-\phi_t)\delta_a^\epsilon\Delta_{j+1}\cdots \phi_t] ) \\
&= \frac{\nu_a}{\epsilon}\Big((p+1) n_a^{p+1} \delta_a^\epsilon\, (\Delta_1\cdots\Delta_p)(0) -\sum_{j=0}^p \EE^\mathcal{D}[\phi_t\Delta_1\cdots \phi_t (\Delta_j\delta_a^\epsilon) \phi_t\cdots\Delta_p \phi_t] \Big),
\end{align*}
where we used the recursion hypothesis on $C_p^t$ to simplify the last line. Assuming finiteness of the moments $C_{p+1}^t$ as $\epsilon\to 0$,  the r.h.s. contribution to $\partial_t C_{p+1}^t\big\vert_\mathrm{bdry}$ should vanish as otherwise it would diverge at small $\epsilon$.  Assuming cyclic invariance of the local moments $\mathfrak{g}_{p+1}^t$, this imposes the boundary conditions \eqref{eq:frak-g-bdry} since, in each of the terms, the function $\delta_a^\epsilon$ localizes one of the argument of the local moments at the boundary. Again, the relaxation toward these boundary values occurs in a time scale of order $\epsilon/\nu_a\delta^\epsilon_a(0)$, which vanishes as $\epsilon\to 0$.

Finally, let us now show that choosing Dirichlet boundary conditions for the heat kernel is compatible with \eqref{eq:frak-g-bdry}, in the sense that the unitary contribution to the time derivatives of the moments vanish at the boundary. Consider first the mean. In the regularized model, at $\epsilon$ finite, its time evolution is $\partial_t \bar \phi_t\vert_\mathrm{unit.} = L_\epsilon(\bar \phi_t)$ with $L_\epsilon(f)= \epsilon^{-1}( G_\epsilon^D(f)- G_\epsilon^D(1)f)$. Since, $G_\epsilon^D(f)$ vanishes at the boundaries, we have  $L_\epsilon(\bar \phi_t)\vert_{x=0,1}=0$ (provided that $\bar \phi_t$ is bounded at the boundaries, which is indeed the case), and hence $\partial_t \bar \phi_t\vert^\mathrm{unit.}_{x=0,1} =0$. This is compatible with the above analysis of the relaxation toward the time independent boundary conditions $\bar \phi_t\vert_{x=0,1}=n_{a,b}$.
Consider now the second moment $C_2^t[\Delta]$. The boundary conditions \eqref{eq:bdry-open} are  $C_2^t[\Delta]\vert_{x=0,1}= n_a^2\Delta\vert_{x=0,1}$. At finite $\epsilon$, the unitary contribution to the evolution equation is $\partial_t C_2^t[\Delta] \vert_\mathrm{unit.}= L_\epsilon(C_2^t[\Delta]) + C_2^t[L_\epsilon(\Delta)] - 2 \mathfrak{L}_\epsilon(\Delta;C^t_1,C^t_1)$, with $L_\epsilon(f)=\epsilon^{-1}(G^D_\epsilon(f)-G^D_\epsilon(1)f)$ and $\mathfrak{L}_\epsilon$ defined in \eqref{eq:L-doubleD} in terms of $L_\epsilon$. At the boundaries, we thus have $\partial_t C_2^t[\Delta]\vert^\mathrm{unit.}_{x=0,1}=C_2^t[L_\epsilon(\Delta)]\vert_{x=0,1}=0$, since $L_\epsilon(f)\vert_{x=0,1}=0$. This is compatible with the boundary condition $C_2^t[\Delta]\vert_{x=0,1}= n_a^2\Delta\vert_{x=0,1}$. Similar arguments apply for higher $p$.

\begin{remark}
One may wonder\footnote{I thank Ph. Biane for asking me this question.} what is the nature of the process defined using the heat kernel $G_\epsilon^\mathrm{Mixed}$ with mixed boundary conditions $(n_b-n_a)f(0)=n_af'(0)$ and $(n_b-n_a)f(1)=n_bf'(1)$.
For $n_b=n_a$ it reduces to the Neumann boundary conditions. The linear profile $\bar n(x)=n_a+x(n_b-n_a)$ does satisfy these conditions and corresponds to the mean steady profile, as physically expected in out-of-equilibrium situations.
The case $n_a=0, n_b=1$ is simpler. 
\end{remark}

\subsection{Correspondence with the large $N$ scaling limit of QSSEP}
\label{sec:equivalence}

We now discuss how the QSSEP in the continuum, as defined  in \ref{def:qssep-continuum}, effectively represents the scaling limit of the discrete QSSEP. That is, we prove Theorem \ref{th:equivalence}. The proof consists in showing that QSSEP moments in the continuum and in the scaling limit of the discrete version do satisfy identical evolution equations.

\begin{proposition} \label{prop:C-eqs-motion}
Let $C_{p+1}^t[\Delta_1,\cdots ,\Delta_p]:=\EE^\mathcal{D}[\phi_t \Delta_1\phi_t\cdots \Delta_p \phi_t]$ be the moments of QSSEP in the continuum, for $\Delta_k$'s smooth functions on $[0,1]$. Then, for $p\geq0$,
\begin{align} \label{eq:loop-Cn-bis}
\partial_t  C^{t} _{p+1}[\Delta_1,\cdots,\Delta_{p}] =& ~~~ \partial^2C^{t} _{p+1}[\Delta_1,\cdots,\Delta_p] 
+   \sum_{j=1}^{p}  C^t_{p+1}[\Delta_1,\cdots, \partial^2 \Delta_j,\cdots, \Delta_{p}] \\
&  \hskip -0.6 truecm -2 \sum_{j=1}^{p} \partial\left({ C^t_j[\Delta_1,\cdots,\Delta_{j-1}] (\partial \Delta_j) C^t_{p+1-j} [\Delta_{j+1},\cdots,\Delta_{p}] }\right) \nonumber \\
&  \hskip -1.7 truecm + 2 \sum_{j<k}^{p} C^t_{p+1-k+j}[\Delta_1,\cdots, \Delta_{j-1}, (\partial \Delta_j) C^t_{k-j}[\Delta_{j+1},\cdots, d_{k-1}] (\partial \Delta_k) , \Delta_{k+1},\cdots, \Delta_{p}] \nonumber 
\end{align}
\end{proposition}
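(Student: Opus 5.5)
The plan is to specialize the general hierarchy of Proposition \ref{prop:eq-moments-general} to the QSSEP choice of CP-map \eqref{eq:sigma-qssep}, and then take the limit $\epsilon\to 0$. Proposition \ref{prop:eq-moments-general} involves $\sigma_\epsilon$ only through $L_\epsilon$ and the two derived objects $\mathfrak{L}_\epsilon$ and $\mathfrak{D}_\epsilon$ of \eqref{eq:Ldressed} and \eqref{eq:Diff-dressed}. It therefore suffices to compute the $\epsilon\to0$ limits of these three maps on smooth functions, away from the boundary layer of width $\sqrt{\epsilon}$, in the sense of the remark following \eqref{eq:L-qssep}. The boundary terms in the open case are ignored here: they only act in that layer and enforce \eqref{eq:frak-g-bdry}, as shown in Section \ref{sec:reg-bdry}.

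First, \eqref{eq:L-qssep} gives $L_\epsilon(\Delta)=\partial^2\Delta+O(\epsilon)$. The first line of \eqref{eq:C-general} thus becomes the first line of \eqref{eq:loop-Cn-bis}, provided we use the assumed finiteness of the moments as $\epsilon\to0$ to commute the limit with $C^t_{p+1}$.

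Next, $\mathfrak{L}_\epsilon$ and $\mathfrak{D}_\epsilon$ are built as combinations of $L_\epsilon$ applied to products. Because $\mathcal{D}=L_\infty[0,1]$ is commutative, I would substitute $L_\epsilon\to\partial^2$ directly into the defining formulas and expand with the Leibniz rule $\partial^2(fg)=f''g+2f'g'+fg''$. For $\mathfrak{D}_\epsilon(d_1,d_2;\Delta)$, the combination $\frac12[\partial^2(d_1\Delta d_2)-d_1\partial^2(\Delta d_2)-\partial^2(d_1\Delta)d_2+d_1\partial^2(\Delta)d_2]$ reduces to $(\partial d_1)\Delta(\partial d_2)$; this is the Leibniz defect, consistent with Remark \ref{remark:L-simple} once a sign convention for $\delta_r\sim\partial$ is fixed. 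For $\mathfrak{L}_\epsilon(\Delta;d_1,d_2)$, the combination $\frac12[\partial^2(d_1\Delta)d_2-\partial^2(d_1)\Delta d_2+d_1\partial^2(\Delta d_2)-d_1\Delta\partial^2 d_2]$ expands to $d_1(\partial^2\Delta)d_2+(\partial d_1)(\partial\Delta)d_2+d_1(\partial\Delta)(\partial d_2)=\partial(d_1(\partial\Delta)d_2)$.

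Substituting $d_1=C^t_j[\cdots]$, $d_2=C^t_{p+1-j}[\cdots]$ in the second line of \eqref{eq:C-general}, and $d_1=\Delta_j$, $d_2=\Delta_k$, $\Delta=C^t_{k-j}[\cdots]$ in the third line, then gives exactly the second and third lines of \eqref{eq:loop-Cn-bis}, including the prefactors $-2$ and $+2$.

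The main obstacle is justifying that the $O(\epsilon)$ corrections vanish in the limit. This requires that $C^t_j[\cdots]$ be smooth enough for $L_\epsilon(d_1\Delta)$ and similar terms to admit the expansion $\partial^2+O(\epsilon)$ uniformly. I would handle it by the standing assumption of finiteness and regularity of the moments as $\epsilon\to0$, together with the boundary-layer remark, rather than by an independent estimate. A second point to check is that the intermediate functions, such as products of moments with test functions, need not lie in the domain of $G_\epsilon$. I would again appeal to the boundary-layer remark, since only interior points are tested.
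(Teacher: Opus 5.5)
Your proposal is correct and follows the paper's proof. You specialize Proposition~\ref{prop:eq-moments-general} with $L_\epsilon=\partial^2+O(\epsilon)$, identify the limits $\mathfrak{L}_\epsilon(\Delta;d_1,d_2)\to\partial(d_1(\partial\Delta)d_2)$ and $\mathfrak{D}_\epsilon(d_1,d_2;\Delta)\to(\partial d_1)\Delta(\partial d_2)$, and defer the error terms to the standing finiteness and regularity assumptions, as the paper does. The only difference is how you get those limits: you expand the defining formulas \eqref{eq:Ldressed}--\eqref{eq:Diff-dressed} with the Leibniz rule (your expansions are correct), whereas the paper substitutes $\delta_r\to i\partial$ in Remark~\ref{remark:L-simple}. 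Your direct computation is, if anything, cleaner, since on the commutative algebra $L_\infty[0,1]$ that inner-derivation substitution is only formal.
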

\begin{proof}
The proof follows from the general theory and Proposition \ref{prop:eq-moments-general}. In the case of QSSEP, we have $L_\epsilon(\Delta) = \partial^2 \Delta + O(\epsilon)$ and, by substituting $\delta_r$ by $i\partial$ in the Remark \ref{remark:L-simple}, we get
\begin{align*}
\mathfrak{L}_\epsilon(\Delta; d_1,d_2) &= \partial(d_1(\partial \Delta) d_2) + O(\epsilon),\\
\mathfrak{D}_\epsilon(d_1,d_2;\Delta) &= (\partial d_1) \Delta (\partial d_2) + O(\epsilon).
\end{align*}
Equation \eqref{eq:loop-Cn-bis} then follows from \eqref{eq:C-general} in Proposition \ref{prop:eq-moments-general}.
\end{proof}

Explicit expressions for first few values of $p=0,1,2$ are written in Appendix \ref{app:proof-equivalence}.

To establish the equivalence, we have to prove the following:

\begin{proposition} \label{prop:C-motion-discrete-qssep}
Let $\hat \Delta_k$ be diagonal $N\times N$ matrices, with entries $(\hat \Delta_k)_{ii}=\Delta_k(i/N)$ for some smooth functions $\Delta_k$ on $[0,1]$. Let $G_s$ be the discrete QSSEP $N\times N$ matrix. Then, for $x\in[0,1]$, and $t$ the diffusively rescaled time, $t=sN^{-2}$, the scaling limit of QSSEP moments,
\[
\hat C^t_{p+1}[\Delta_1,\cdots,\Delta_p](x):= \lim_{N\to\infty} \EE[\langle i=[xN]| G_{[s=N^2t]}\hat \Delta_1G_{[s=N^2t]}\cdots\hat \Delta_p G_{[s=N^2t]}|i=[xN\rangle],
\] 
satisfy the equations \eqref{eq:loop-Cn-bis}.
\end{proposition}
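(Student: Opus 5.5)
The plan is to derive the evolution equations directly from the discrete stochastic dynamics, and to show that after diffusive rescaling they reduce term by term to \eqref{eq:loop-Cn-bis}.

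\textbf{Itô step.} I start from the discrete QSSEP SDE recalled in Appendix \ref{app:discrete-qssep}, $\dd G_s = i[\dd h_s,G_s]-\frac12[\dd h_s,[\dd h_s,G_s]] + \mathrm{boundary\ terms}$. The bulk increments are nearest-neighbour, $\dd h_s=\sum_i (E_{i+1,i}\,\dd W^i_s + E_{i,i+1}\,\dd \bar W^i_s)$, with independent complex Brownian motions of unit variance. The key elementary observation is the contraction rule: for any matrix $A$,
\[
\EE[\dd h_s\, A\, \dd h_s]=\hat\sigma_N(A)\,\dd s,\qquad \hat\sigma_N(A):=\sum_i \big(A_{ii}E_{i+1,i+1}+A_{i+1,i+1}E_{ii}\big).
\]
The right-hand side is diagonal and depends only on the diagonal part of $A$. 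So $\hat\sigma_N=S\circ\mathrm{diag}$, with $S$ the shift-sum operator $(Sf)(i)=f(i-1)+f(i+1)$ (with the appropriate boundary modification). This is the exact finite-$N$ counterpart of the It\^o rule \eqref{eq:Ito-sigma}: the conditioned expectation $\EE^\mathcal{D}$ on $\mathcal{D}=L_\infty[0,1]$ is replaced by taking the diagonal part, and $\sigma_\epsilon$ by $S$. The associated Lindbladian on diagonal matrices is $S-2$, i.e. the discrete Laplacian. After $t=s/N^2$ it becomes $N^2(S-2)\to\partial^2$ on smooth functions; this is the identification $\epsilon\leftrightarrow N^{-2}$ already used in Section \ref{sec:reg-bdry}.

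\textbf{Drift of the dressed moment.} I apply the It\^o formula to the product $G\hat\Delta_1G\cdots\hat\Delta_pG$ and take the $(ii)$ entry and the expectation. This produces three kinds of terms, mirroring the computation behind Proposition \ref{prop:eq-moments-general} in Appendix \ref{app:proof-eq-of-motion}.
\begin{itemize}
\item The second-order terms acting on a single $G$, together with the double commutator, give $N^2(S-2)$ acting on the outer index, and $N^2(S-2)\hat\Delta_j$ acting on each insertion.
\item Contractions between the $G$ factors at positions $j$ and $j+1$, adjacent across one insertion $\hat\Delta_j$, produce terms of the form $\mathrm{diag}(G\hat\Delta_1\cdots G)$ times $\mathrm{diag}(G\cdots G)$, dressed by discrete differences of $\hat\Delta_j$.
\item Contractions between non-adjacent $G$ factors at positions $j<k$ produce an inner segment $\mathrm{diag}(G\hat\Delta_{j+1}\cdots G)$ inserted between discrete gradients of $\hat\Delta_j$ and $\hat\Delta_k$.
\end{itemize}
I will organise the algebra exactly as in the continuum proof: write $S-2$ via the discrete derivations $\delta f(i)=f(i+1)-f(i)$, and use the discrete Leibniz defect $\sum_i \delta d_1\,\Delta\,\delta d_2$. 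This yields the discrete versions of $\mathfrak{L}_\epsilon$ and $\mathfrak{D}_\epsilon$ from Remark \ref{remark:L-simple}. Each discrete difference carries a factor $N^{-1}$, which is compensated by the $N^2$ of the time rescaling and the $N^{-1}$ scaling of the loops. Taylor expanding the smooth $\Delta$'s then gives $\partial^2$, $\partial(\cdot\,\partial\Delta_j\,\cdot)$ and $(\partial\Delta_j)(\cdot)(\partial\Delta_k)$, as in \eqref{eq:loop-Cn-bis}.

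\textbf{Factorisation (main obstacle).} The crucial and hardest step is to replace expectations of products, such as $\EE[\mathrm{diag}(A)_{ii}\,B_{ii}]$ where $A,B$ are dressed segments, by products of expectations $\EE[A_{ii}]\,\EE[B_{ii}]$, and similarly inside nested insertions. This is what turns the exact finite-$N$ hierarchy, which is not closed, into the closed nonlinear hierarchy with factors $C_j\,C_{p+1-j}$ and $C_{k-j}$ inside $C_{p+1-k+j}$. I would obtain it from the known large-$N$ scaling of QSSEP cyclic cumulants (\cite{bernard2019open,hruza2022coherent}): connected correlations of loop observables are suppressed by extra powers of $N^{-1}$ relative to the products. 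Hence the disconnected parts dominate and correlation corrections vanish after summation over the $O(N)$ intermediate indices. I would check this explicitly on $p=1,2$ first, against Appendix \ref{app:proof-equivalence}, before doing the general bookkeeping.

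\textbf{Boundaries and limit.} Finally, I treat the boundary injection/extraction terms via Appendix \ref{app:discrete-bdry}. In the open case they impose, in the scaling limit, the same boundary values \eqref{eq:bdry-open} as the relaxation mechanism \eqref{eq:bdry-process}, while the bulk equations are unaffected away from distance $O(1/N)$ of the boundary. Assuming the limits $\hat C^t_{p+1}$ exist and are smooth enough to justify the Taylor expansions, I then pass to $N\to\infty$ to conclude that the $\hat C^t_{p+1}$ satisfy \eqref{eq:loop-Cn-bis}.
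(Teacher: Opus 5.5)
Your plan is essentially the paper's proof (Appendix~\ref{app:proof-equivalence}). It uses the same ingredients: It\^o calculus on the discrete SDE with the diagonal-projecting contraction $\dd h_s\,A\,\dd h_s=\hat\sigma_N(A)\,\dd s$ as the finite-$N$ analogue of \eqref{eq:Ito-sigma}, diffusive rescaling with Taylor expansion, large-$N$ factorization taken as an input just as the paper does, and boundary terms that only fix the boundary conditions. Note that this contraction is a pathwise It\^o rule, so your expectation on the left-hand side should be dropped.

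One correction to your bookkeeping: for general $p$ the product terms $C_j(\partial\Delta_j)C_{p+1-j}$ do not come from contracting the two $G$ factors adjacent to $\hat\Delta_j$ with each other, but from contracting the first or last $G$ with them --- equivalently, after transferring the noise onto the insertions as the paper does, $\hat\Delta_j\to e^{-i\dd h_s}\hat\Delta_j e^{i\dd h_s}$, from contracting the outer conjugation with $[\dd h_s,\hat\Delta_j]$, while the nested terms come from contracting $[\dd h_s,\hat\Delta_j]$ with $[\dd h_s,\hat\Delta_k]$ for all $j<k$ (including $k=j+1$) --- and every surviving term carries exactly two lattice differences, $O(N^{-2})$, balanced by the $N^2$ of $t=s/N^2$ alone, since the dressed moments are $O(1)$ and no extra loop factor $N^{-1}$ enters.
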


The proof is given in Appendix \ref{app:proof-equivalence}. The equations of motion for QSSEP moments were actually obtained earlier in \cite{bernard2021dynamics,hruza2022coherent}, in a different but equivalent form. See the Remark \ref{remark:loop-eqs} below. Appendix \ref{app:proof-equivalence} provides an alternative, more direct, proof of this statement.

Combining Propositions \ref{prop:C-eqs-motion} and \ref{prop:C-motion-discrete-qssep} proves the equivalence Theorem \ref{th:equivalence}.

\begin{remark} \label{remark:loop-eqs}
Equations \eqref{eq:loop-Cn-bis} can be written in terms of the local moments, as defined in \eqref{eq:phi-moments}.
After integration by parts, they translate into:
\beq \label{eq:diff-gp-local}
\Big({\partial_t-\sum_j\partial_j^2}\Big)\mathfrak{g}^t_p(x_1,\cdots,x_p)= 2\sum_{i<j}\partial_i\partial_j\Big({\delta(x_i-x_j)\,\mathfrak{g}^t_{j-1}(x_i,\cdots,x_{j-1})\mathfrak{g}^t_{p+i-j}(x_j,\cdots,x_{i-1}) }\Big),
\eeq
where we used an implicit cyclic notation in the argument of the $\mathfrak{g}^t_q$'s. We there recognize the equations of motion for the local moments written in \cite{bernard2021dynamics,hruza2022coherent}. 
\end{remark}

\section{Applications}

Let us present a few simple applications of this continuous formulation.  Unfortunately, although free probability yields a clear framework, it does not provide a lot of efficient computational tools, at least to our knowledge. We nevertheless shall use these applications to formulate a few open questions.

\subsection{QSSEP invariant measures}

We here aim at computing the invariance measures starting from the continuous formulation, and in particular from the evolution equations \eqref{prop:C-eqs-motion}. Of course these invariant measures are known from the discrete approach. We analyze the three versions, periodic, closed and open.

\medskip
\noindent
$\bullet$ \underline{Invariant measure in the periodic case.}
First, we show (again) that the global moments $\tau_n:=\int\!\dd x\, \overline{\phi_t^n}(x)$, with  $\overline{\phi_t^n}:=\mathbb{E}^\mathcal{D}[\phi_t^n]$, $n\geq 0$, are conserved. Equations \eqref{eq:loop-Cn-bis}, with all matrices $\Delta_k$ set to $1$, yields  $\partial_t \overline{\phi_t^n}=\partial^2 \overline{\phi_t^n}$, since $\overline{\phi_t^n}=C_n^t[1\cdots 1]$. Using periodicity, this implies that $\tau_n$ is conserved. In particular $\tau_n=\int\! \dd x\,\phi_0^n(x)$, which is interpreted as the trace of the $n$-th power of the 'matrix' $\phi_0$. Furthermore, $\overline{\phi_t^n}$ is $x$-independent in the steady measure and equal to $\tau_n$. 

By analogy with free cumulants, given the series $\tau_n$, we define numbers $\theta_p$ via the moment-cumulant relation:
\[
\tau_n =: \sum_{\pi \in NC_n} \theta_\pi ,
\] 
where the sum is over the set of non-crossing partitions, and $\theta_\pi=\prod_{b\in\pi}\theta_{|b|}$ as usual.

Let $C_{p+1}[d_1,\cdots,d_p]$ be the steady moments, with $d_k$ functions on $[0,1]$. From the previous remark, $C_1=\tau_1=\theta_1$. Let us determine the higher steady cumulants step by step.

The stationarity equation \eqref{prop:C-eqs-motion} for $C_2[d]$ reads $0 = \partial^2 C_2[d]   +  C_2[\partial^2 d] - 2 \tau_1^2\, \partial^2 d$, using $C_1=\tau_1$. We let $D_2[d] := C_2[d]-\tau_1^2\, d$. Then $\partial^2 D_2[d]+ D_2[\partial^2 d]=0$. The only periodic, linear in $d$, solution to this equation is $D_2[d]\propto \int_0^1\!\dd y\, d(y)$. Equivalently  $C_2[d](x)= c_2 [\int\!dy\, d(y)] + \tau_1^2 d(x)$, for some constant $c_2$. The latter is fixed by setting $d=1$ and using the sum rule, $C_2[1]=\tau_2$, and hence $c_2=\tau_2-\tau^2_1=\theta_2$. Finally, setting $\overline{(f)}:= \int_0^1\!dy\, f(y)$ for any $f\in L_\infty[0,1]$ --that is, the average of $f$ w.r.t. the Lebesgue measure-- we have:
\[
C_2[d]= \theta_2 \overline{(d)} + \theta_1^2 d .
\]

Using the formula for $C_1$ and $C_2$, the steady equation \eqref{prop:C-eqs-motion} for $C_3$ reads:
\begin{align*}
0 =& ~~ \partial_x^2  C_3[d_1,d_2] +  C_3[\partial^2 d_1,d_2] +  C_3[d_1,\partial^2 d_2] \\
&-2\theta_1^3[(\partial d_1) (\partial d_2) + d_1 (\partial^2 d_2) + (\partial^2 d_1) d_2)) ] \\
&-  2\theta_1\theta_2 [(\partial^2 d_1)\overline{(d_2)} + \overline{(d_1)}(\partial^2 d_2) - \overline{((\partial d_1)(\partial d_2))}] .
\end{align*}
We let $D_3[d_1,d_2]:=C_3[d_1,d_2]- \theta_1^3 d_1 d_2 -\theta_1\theta_2( \overline{(d_1)}d_2+ d_1 \overline{(d_2)} + \overline{(d_1d_2)})$. After a few integration by parts and algebraic manipulations, the steady condition becomes $\partial^2D_3[d_1,d_2]+D_3[\partial^2 d_1,d_2]+D_3[d_1,\partial^2 d_2]=0$. By the same argument as above, the only periodic, linear in $d_{1,2}$, solution is $D_3[d_1,d_2]\propto \overline{(d_1)}\,\overline{(d_2)}$. The proportionality coefficient is determined using the sum rule $C_3[1,1]=\tau_3=\theta_3 + 3\theta_1\theta_2+ \theta_1^3$ and identified with $\theta_3$. Thus
\[
C_3[d_1,d_2] =\theta_3 \overline{(d_1)}\,\overline{(d_2)}  + \theta_1\theta_2( \overline{(d_1)}d_2+ d_1 \overline{(d_2)} + \overline{(d_1d_2)}) + \theta_1^3 d_1 d_2 .
\]
We clearly see a pattern emerging, taking into account the role of non-crossing partitions.

The final result is better expressed in terms of the integrated moments, or loop expectation values:
\[
L_{p+1}[d_0,d_1,\cdots,d_p] :=\int_0^1\!\dd x\, d_0(x)C_{p+1}[d_0,d_1,\cdots,d_p] .
\]
To simplify the notation, let us denote $\overline{(d_id_j\cdots)}\,\overline{(d_k\cdots)}=[\overline{(ij\cdots)}\,\overline{(k\cdots)}]$, so that, for instance, $L_2[d_0,d_1]=\theta_2 [\overline{(0)}\,\overline{(1)}]+ \theta_1^2 [\overline{(01)}]$. Then:

\begin{claim} \label{claim:invariant}
In periodic QSSEP, the steady integrated moments are:
\beq \label{eq:inv-measure-loop}
L_{p+1}[d_0,d_1,\cdots,d_p]=\sum_{\pi\in NC_{p+1}} \theta_{\pi^*}\, [\prod_{b\in\pi}\overline{(b)}] ,
\eeq
where the sum is over the non-crossing partitions with  $\pi^*$ the Kreweras dual of $\pi$.
\end{claim}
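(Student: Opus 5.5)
We prove the claim by an ansatz-and-uniqueness argument that generalizes the explicit computations of $C_2$ and $C_3$ above.

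\emph{Candidate.} We start from the candidate suggested by \eqref{eq:inv-measure-loop}, equivalently written as a $\mathcal{D}$-valued moment. Removing the integration against $d_0$, the ansatz reads
\[
C_{p+1}[d_1,\cdots,d_p](x)=\sum_{\pi\in NC_{p+1}}\theta_{\pi^*}\,\Big[\prod_{b\in\pi,\,0\notin b}\overline{(b)}\Big]\,\Big(\prod_{j\in b_0(\pi)\setminus\{0\}} d_j\Big)(x),
\]
where $b_0(\pi)$ is the block of $\pi$ containing $0$. The $p=1,2$ cases reproduce the formulas already derived for $C_2$ and $C_3$.

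\emph{Structure of the argument.} We proceed by induction on $p$. Assume the steady moments $C_{q}$, $q\leq p$, are given by the ansatz. The stationarity version of \eqref{eq:loop-Cn-bis} for $C_{p+1}$ is then a linear equation
\[
\mathcal{H}(C_{p+1})=S_{p+1}.
\]
Here $\mathcal{H}(C)[d]:=\partial^2C[d]+\sum_j C[\ldots,\partial^2 d_j,\ldots]$, and the source $S_{p+1}$ is built from the nonlinear terms of the second and third lines of \eqref{eq:loop-Cn-bis}.

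\emph{Key steps.}

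(i) \emph{Uniqueness up to a constant.} On periodic functions, the kernel of $\mathcal{H}$ among multilinear maps is spanned by $\overline{(d_1)}\cdots\overline{(d_p)}$. To see this, pass to local moments $\mathfrak{g}_{p+1}(x_0,\ldots,x_p)$: $\mathcal{H}$ becomes the $(p+1)$-dimensional Laplacian $\sum_j\partial_j^2$ on the torus. Its periodic harmonic distributions are constants, by Fourier analysis.

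(ii) \emph{The constant.} The constant is fixed by the sum rule $C_{p+1}[1,\ldots,1]=\tau_{p+1}=\sum_{\pi}\theta_\pi$. This identifies its coefficient with $\theta_{p+1}$: it is the coefficient of the singleton-block partition $\pi$, whose Kreweras dual is the one-block partition, giving $\theta_{\pi^*}=\theta_{p+1}$.

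(iii) \emph{Verification.} The remaining task, and the core of the proof, is to check that the ansatz actually solves $\mathcal{H}(C_{p+1})=S_{p+1}$. It is cleanest in the local form \eqref{eq:diff-gp-local}, with the left-hand side $\partial_t$ set to zero. The ansatz gives
\[
\mathfrak{g}_{p+1}=\sum_{\pi}\theta_{\pi^*}\,\delta_\pi,
\]
where $\delta_\pi$ forces equality of the points within each block of $\pi$. We then need
\[
-\sum_j\partial_j^2\,\mathfrak{g}_{p+1}=2\sum_{i<j}\partial_i\partial_j\big(\delta(x_i-x_j)\,\mathfrak{g}\,\mathfrak{g}\big).
\]
Two facts drive this computation:
\begin{itemize}
\item[$\bullet$] For a product of delta functions $\delta_\pi$, the operator $\sum_j\partial_j^2$ acting on $\delta_\pi$ reduces, via $(\partial_i+\partial_j)\delta(x_i-x_j)=0$ and integration by parts, to a sum of terms $-2\partial_i\partial_j$ over pairs $i<j$ in the same block.
\item[$\bullet$] Splitting the cycle at a pair $i<j$ lying in a common block of $\pi$ yields a bijection with the pairs $(\pi_1,\pi_2)$ of non-crossing partitions of the two arcs, merged at $\{i,j\}$. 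Kreweras duality factorizes compatibly, so that $\theta_{\pi^*}=\theta_{\pi_1^*}\theta_{\pi_2^*}$.
\end{itemize}
Matching terms pair by pair then proves the identity.

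\emph{Main obstacle.} The combinatorial bijection in the second bullet, specifically the factorization of the Kreweras complement under the split at a block-internal pair $\{i,j\}$, is the step I expect to be hardest. I would first try to prove it by drawing $\pi$ and $\pi^*$ as interlaced non-crossing diagrams on $2(p+1)$ points. Cutting along the chord $(i,j)$ separates $\pi^*$ into its restrictions to the two arcs, and this should give the factorization directly.
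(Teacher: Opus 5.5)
Your proof is correct, and it follows a different route from the paper's. It is also more complete: the paper works out $p\le 2$ by hand and checks $p=3$. Beyond that it only sketches a recursion on the $\mathcal{D}$-valued free cumulants. It asserts that once $\kappa_q=\theta_q\,\overline{(d_1)}\cdots\overline{(d_{q-1})}$ for $q\le p$, the nonlinear terms drop out of the equation for $\kappa_{p+1}$, leaving $\mathcal{H}(\kappa_{p+1})=0$. That cancellation is never derived, and the author states that no algebraic proof was found.

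You keep the same uniqueness half: periodic harmonic distributions on the torus are constant, and the sum rule fixes the constant. The logic is cleaner if stated as follows: the ansatz already satisfies $\int\mathfrak{g}_{p+1}=\sum_\pi\theta_{\pi^*}=\sum_\pi\theta_\pi=\tau_{p+1}$, because Kreweras complementation is a bijection, so the kernel component vanishes. You then replace the cumulant recursion by a direct check, in local form, that the moment ansatz is stationary, and that check does close. Blockwise translation invariance, $\sum_{i\in b}\partial_i\delta_b=0$, squared, gives $\sum_j\partial_j^2\delta_\pi=-2\sum_{i<j,\ i\sim_\pi j}\partial_i\partial_j\delta_\pi$. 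The matching identity
\[
\sum_{\pi:\, i\sim_\pi j}\theta_{\pi^*}\,\delta_\pi=\delta(x_i-x_j)\,\mathfrak{g}_{j-i}(x_i,\ldots,x_{j-1})\,\mathfrak{g}_{p+1+i-j}(x_j,\ldots,x_{i-1})
\]
holds even before applying $\partial_i\partial_j$. Note that the first index is $j-i$; \eqref{eq:diff-gp-local} in the paper writes $j-1$ there.

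The step you flagged goes through as you expect. Since $i$ and $j$ share a block, non-crossingness forces every other block of $\pi$, and every block of $\pi^*$, to lie on one side of the chord $(i,j)$. On each side, $\pi^*$ restricts to the Kreweras complement of the restriction of $\pi$. The extra endpoint ($j$ for the arc $\{i,\ldots,j-1\}$, $i$ for the other arc) is cyclically adjacent to a point of the same block, so it imposes no additional constraint. Rotation conventions are irrelevant because $\theta_{\pi^*}$ depends only on block sizes.

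As for what each route buys: the cumulant recursion, if completed, would show directly that the steady $\mathcal{D}$-valued cumulants are $\theta_{p+1}\prod_j\overline{(d_j)}$. That would exhibit the freeness of $\phi$ from $\mathcal{D}$ which the paper says underlies the formula. Your argument gives an elementary, complete proof of the claim, but leaves that free-probabilistic interpretation implicit.
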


For instance, 
\begin{align*}
L_3[d_0,d_1,d_2] &=\theta_3 [\overline{(0)}\,\overline{(1)}\,\overline{(2)}]  + \theta_1\theta_2 ([\overline{(01)}\,\overline{(2)}]+[\overline{(02)}\,\overline{(1)}]+[\overline{(0)}\,\overline{(12)}]) + \theta_1^3 [\overline{(012)}]\\
L_4[d_0,d_1,d_2,d_3] & = \theta_4 [\overline{(0)}\,\overline{(1)}\,\overline{(2)}\,\overline{(3)}] + \theta_2^2 ([\overline{(02)}\,\overline{(1)}\,\overline{(3)}]+[\overline{(0)}\,\overline{(13)}\,\overline{(2)}]) ,\\
&\hskip -1.0 truecm +\theta_1^2\theta_2([\overline{(1)}\,\overline{(023)}]+[\overline{(2)}\,\overline{(013)}]+[\overline{(012)}\,\overline{(3)}]+[\overline{(0)}\,\overline{(123)}]+[\overline{(01)}\,\overline{(23)}]+[\overline{(03)}\,\overline{(12)}]) \\
& \hskip -1.0 truecm +\theta_1\theta_3([\overline{(01)}\,\overline{(2)}\,\overline{(3)}]+[\overline{(03)}\,\overline{(1)}\,\overline{(2)}]+[\overline{(0)}\,\overline{(1)}\,\overline{(23)}]+[\overline{(0)}\,\overline{(12)}\,\overline{(3)}]) + \theta_1^4 [\overline{(0123)}] .
\end{align*}

The claim has been proved for $L_1, L_2,L_3$ above. We also did checked it for $L_4$, starting from \eqref{prop:C-eqs-motion}. 
A strategy to prove it recursively could consist to start from \eqref{prop:C-eqs-motion} and transform them into equations for the free cumulants. By recursion, one would then prove that $\kappa_{p+1}=\theta_{p+1}\, [\overline{(d_1)}\cdots \overline{(d_{p})}]$. Assuming this to be true up to $p$, the steady equation for free cumulants implies that $\partial^2\kappa_{p+1}[d_1,\cdots,d_{p}] +   \sum_{j=1}^{p}  \kappa_{p+1}[d_1,\cdots, \partial^2 d_j,\cdots, d_{p}]=0$, whose only periodic, linear in $d_k$, solution is $\kappa_{p+1}\propto [\overline{(d_1)}\cdots \overline{(d_{p})}]$. The sum rule $\tau_{p+1}=C_{p+1}[1,\cdots,1]$ then fixes the proportionality constant to be $\theta_{p+1}$. Unfortunately, we didn't find an algebraic proof of the claim \ref{claim:invariant} starting from \eqref{prop:C-eqs-motion}, although the formula \eqref{eq:inv-measure-loop} as a simple interpretation as the expectation of mutually free variables.
\medskip

\noindent
$\bullet$ \underline{Invariant measure in the closed case.}
This is very similar to the periodic case. As in the periodic case, the Neumann boundary conditions ensure that the global moments $\tau_{n}:=\int\!\dd x\,\overline{\phi_t^n}(x)$ are conserved, since \eqref{prop:C-eqs-motion} reduces to $\partial_t \overline{\phi_t^n}= \partial^2 \overline{\phi_t^n}$. The $\tau_{p+1}$'s specify the orbit.

Using arguments similar to those used in the periodic case, it is easy to verify, at least for the first few values of $p$, that the steady moments $C_{p+1}$ are identical to those of the periodic case:
\[
C_1=\theta_1,\quad C_2[d] = \theta_2 \overline{(d)} + \theta_1^2 d,\quad \mathrm{etc}.
\]
It is safe to claim that the invariant measures in the closed and periodic cases coincide.
\medskip

\noindent
$\bullet$ \underline{Invariant measure in the open case.} 
The boundary conditions in the open case are given in \eqref{eq:bdry-open}. They drive the system out-of-equilibrium. Contrary to the periodic and closed case, the integrated moments $\tau_n$ are non longer conserved. Unfortunately, we didn't find an algebraic derivation of the open QSSEP steady measure starting from its free probability formulation. Let us nevertheless briefly describe its structure to pose the problem. We stick to the case $n_a=0$, $n_b=1$, which is simpler. 

This steady measure is better describe in terms of the local free cumulants $g_{p}^t$, defined in \eqref{eq:kappa-gp}. Let $g_{p}$ their steady value. Let $\varphi$ be the Lebesgue measure on $[0,1]$ and, for $x\in[0,1]$, let $\mathbb{I}_x$ be the characteristic function of the interval $[0,x]$, viewed as random variables on $[0,1]$ with moments $\varphi( \mathbb{I}_{x_1}\cdots \mathbb{I}_{x_q})=x_1\wedge\cdots\wedge x_q$. Then, in the open QSSEP steady measure, the local free cumulants $g_{p+1}$ are the free cumulants of the $\mathbb{I}_x$'s, that is:
\beq \label{eq:qssep-steady}
g_{p}(x_1,\cdots,x_p) = \kappa_p( \mathbb{I}_{x_1},\cdots ,\mathbb{I}_{x_p}).
\eeq

This was proved by Ph. Biane \cite{biane2021combinatorics}, starting from characterizing equations for the local free cumulants derived in \cite{bernard2021solution}. The proof of \cite{biane2021combinatorics} was essentially combinatorial. Let us here present a more analytical proof following \cite{hruza2022coherent}. First, given the free local cumulants $g_p^t$, one defines auxiliary functions $\varphi_p^t$ in a way inspired by the moment-cumulant formula:
\beq \label{eq:def-varphi-gp}
\varphi_p^t(x) := \sum_{\pi\in NC_p} g_\pi^t(x).
\eeq
Second, one proves \cite{hruza2022coherent}, that they satisfy the same evolution equations as the local free cumulants, namely:
\beq \label{eq:varphi-gp-diff}
\Big({\partial_t-\sum_j\partial_j^2}\Big)\varphi^t_p(x)= 2\sum_{i<j}\delta(x_i-x_j)\,\partial_i\varphi^t_{j-1}(x_i,\cdots,x_{j-1})\, \partial_j \varphi^t_{p+i-j}(x_j,\cdots,x_{i-1}) .
\eeq
These are similar, but slightly different, to those satisfied by the local moments $\mathfrak{g}_p^t$, see \eqref{eq:diff-gp-local}.
Finally, one verifies that the functions $\varphi_p(x_1,\cdots,x_p) = x_1\wedge \cdots \wedge x_p$, are steady solutions of \eqref{eq:varphi-gp-diff}, with proper boundary conditions. The moment-cumulant relation \eqref{eq:def-varphi-gp} then proves the claim \eqref{eq:qssep-steady}. 

However, this proof certainly misses a conceptual link with free probability and does not explain the occurence of the free cumulants w.r.t. the Lebesgue measure in \eqref{eq:qssep-steady}.

\subsection{Non-equal time correlations in QSSEP}

Finally, we illustrate how to compute some correlation functions within this framework. We shall use the freeness of $U_{t+s;s}$ w.r.t. $U_s$ to get non-equal time correlation functions say $\EE[\phi_{t+s}\Delta\phi_s]$, in the steady measure (i.e. in the limit $s\to\infty$). Unfortunately, this correlation function is not the most physically relevant. We assume dealing with the open case.

We write $\EE^\mathcal{D}[\phi_{t+s}\Delta\phi_s]=\EE^\mathcal{D}[U_{t+s;s}\phi_s U_{t+s;s}^*\Delta\phi_s]$, and decomposed it on free cumulants. Mixed cumulants don't contribute, since $U_{t+s;s}$ is free w.r.t. $\phi_s$, thus:
\begin{align*}
 \EE^\mathcal{D}[(U_{t+s;s}\phi_s U_{t+s;s}^*)\Delta\phi_s] &= 
 \kappa_2^{U,U^*}[\kappa_1^\phi]\Delta \kappa_1^\phi + \kappa_1^U \kappa_1^\phi \kappa_1^{U^*} \Delta \kappa_1^\phi + \kappa_1^U \kappa_2^\phi[\kappa_1^{U^*}\Delta] \\
& = \mathbb{E}^\mathcal{D}[U_{t+s;s}\kappa_1^\phi U_{t+s;s}^*]\Delta \kappa_1^\phi+ \mathbb{E}^\mathcal{D}[U_{t+s;s}] \kappa_2^\phi[\mathbb{E}^\mathcal{D}[{U_{t+s;s}^*}]\Delta]  ,
\end{align*}
 where we suppressed time indices to simplify the notation (how to recover them is clear). 
 Letting $s\to\infty$ to reach the steady measure, we replace the cumulants at time $s$ by their steady values. So that:
 \[
 \kappa_2^\phi(x)\leadsto g_1(x)=\bar n(x),\ \kappa_2^\phi[\Delta](x)\leadsto \int_0^1\! \dd y\, g_2(x,y)\Delta(y) ,
 \]
with $\bar n(x)=n_a + x(n_b-n_a)$ and $g_2(x,y) = (n_b-n_a)^2(xy- x\wedge y)$.
We now have to compute $\EE^\mathcal{D}[U_{t+s;s}]$ and $\EE^\mathcal{D}[U_{t+s;s}\Delta U^*_{t+s;s}]$. Since $U_{t+s;s}$ is distributed as $U_t$, this amounts to compute $\EE^\mathcal{D}[U_{t}]$ and $\EE^\mathcal{D}[U_{t}\Delta U^*_{t}]$. Since $U_t$ is solution of SDE $\dd U_t = (i\dd X_t -\frac{\sigma_\epsilon(1)}{2}\dd t)\, U_t$, with $U_{t=0}=1$, we have $\dd \EE^\mathcal{D}[U_{t}]= -\frac{\sigma_\epsilon(1)}{2}\EE^\mathcal{D}[U_{t}]\dd t$, hence $\EE^\mathcal{D}[U_{t}]=e^{-t \frac{\sigma_\epsilon(1)}{2}}$, and $\EE^\mathcal{D}[U_{t}]=0$ in the limit $\epsilon \to 0$. 
To compute $\EE^\mathcal{D}[U_{t}\Delta U_t^*]=:F_2^t[\Delta]$, recall that it satisfies $\partial_t F_2^t[\Delta] = L_\epsilon(F_2^t[\Delta])$ with $L_\epsilon(\Delta) = \partial^2+O(\epsilon)$. Hence, $F_2^t[\Delta]= (e^{t\partial^2})\cdot \Delta$.
However, $\kappa_1^\phi$ is the steady solution of this heat equation, so that $F_2^t[\kappa_1^\phi]=\kappa_1^\phi$, and hence $\mathbb{E}^\mathcal{D}[U_{t+s;s}\kappa_1^\phi U_{t+s;s}^*]=\kappa_1^\phi$, for $s\to\infty$, independently of the time $t$. As a consequence
\[ 
 \EE^\mathcal{D}[\phi_{t+s}\Delta\phi_s](x)\vert_{s\to\infty} = (\bar n \Delta \bar n)(x) .
\]
Of course one may consider more complicated correlation functions.

On the physical side, these correlation functions do not give full access to the physically more relevant correlations such as $\EE[\vev{c_j^\dag(t)c_i}\vev{c_i^\dag(t)c_j}\cdots]$. The latter would be more naturally encoded in a continuous formulation of the dynamics of the QSSEP quantum state, as mentioned in the Introduction \ref{sec:intro}.

\subsection*{Acknowledgments}
I thank Philippe Biane for our many insightful discussions. This work was supported by the CNRS, the ENS, the ANR project ESQuisses under contract number ANR-20-CE47-0014-01, and by the Simons Collaboration “Probabilistic Paths to QFT”.

\appendix

\section{Discrete QSSEP: basic notation} 
\label{app:discrete-qssep}

We summarize here a minimum of information about discrete QSSEP, in a way adapted to the main text (the formulation is slightly different from the usual references). More details can be found in the review \cite{barraquand2025introduction}.
\medskip

Discrete QSSEP is a model of noisy fermionic systems, defined on a chain made of $N$ sites (it can actually be defined on any graph). Let $G_s$ be its matrix of two-point functions, $(G_s)_{ij}:=\mathrm{Tr}(\rho_s c_i^\dag c_j)$, where $\rho_s$ is the state (density matrix) of the system at time $s$ and $c_i^\dag, c_i$ the annihilation-creation operators. The indices $i,j=1,\cdots,N$ labels the sites of the chain.

Discrete QSSEP may be viewed as a model of random matrix. The time evolution for $G_s$ is $G_{s+\dd s}=e^{i\dd h_s}G_se^{-i\dd h_s}$, up to boundary terms, so that $G_s$ is solution of the following (classical) matrix-valued SDE:
\beq \label{eq:G-app}
\dd G_s = i[\dd {h}_s,G_s] -\frac{1}{2} [\dd h_s,[\dd h_s,G_s]] + \mathrm{boundary~drift~terms}.
\eeq
with Hamiltonian increment $dh_s$ of the form $\dd h_s=\sum_j( E_{j+1;j}\dd W^j_s+ E_{j;j+1}\dd\bar W^j_s)$, with $E_{i;j}=|i\rangle\langle j|$ and $W_j^s$ complex Brownian motions normalized such that $\dd W^j_s \dd \bar W^k_s=\delta^{j;k}\dd s$. There is one complex Brownian motion per edge. Here, the time $s$ is the microscopic time. When taking the scaling limit, we shall rescale space and time diffusively, $i\leadsto x=i/N$ and  $s\leadsto t:=s/N^{2}$.

There are three cases:\\
(i) \underline{"Periodic"}: There are no boundary terms \eqref{eq:G-app} and the sums are periodic, so that $\dd h_s=\sum_{j=1}^N( E_{j+1;j}\dd W^j_s+ E_{j;j+1}\dd \bar W^j_s)$ with periodic notation, e.g.  $E_{N+1;j}=E_{1;j}$.\\
(ii) \underline{"Closed"}: There are no boundary terms \eqref{eq:G-app} but also no periodicity, so that $\dd h_s=\sum_{j=1}^{N-1}( E_{j+1;j}\dd W^j_s+ E_{j;j+1}\dd \bar W^j_t)$.\\
(ii) \underline{"Open"}: The Hamiltonian increments are as in the "closed case" but there are non-Hamiltonian boundary terms:
\beq
\dd G_{ij}\vert_\mathrm{bdry}= \sum_{p=1,N} \nu_p\left({ n_p\delta_{pi}\delta_{pj} - \frac{G_{ij}}{2}(\delta_{pi}+\delta_{pj})}\right)\dd s .
\eeq
with $n_1:=n_a$ and $n_N:=n_b$ are parametrizing boundary densities, and $\nu_1=\nu_a$, $\nu_N=\nu_b$ the boundary injection/extraction rates. 

Let us write the Hamiltonian increments in a compact form as $\dd h_s=\dd {\bf W}_s+\dd {\bf \bar W}_s$, with:
\beq
\dd {\bf W}_s :=\sum_j E_{j+1;j}\dd W^j_t,\quad \dd {\bf \bar W}_s :=\sum_j E_{j;j+1}\dd\bar W^j_s .
\eeq
In matrix form, the (classical) It\^o rules read, for $A,B,C$ three $N\times N$ matrices:
\begin{subequations} \label{eq:ZIto-rules}
\begin{align}
A\, \dd{\bf W}_s\, B\, \dd {\bf \bar W}_s\, C  &= A\, \eta(SBS^\dag)\, C\, \dd s, \\
A\, \dd {\bf \bar W}_s\, B\, \dd{\bf W}_s\, C  &= A\, \eta(S^\dag BS)\, C\, \dd s ,
\end{align}
\end{subequations}
where we have defined the map $\eta$ which projects matrices to their diagonal parts via:
\beq
\eta:\ A \to \eta(A):= \mathrm{diag}( A_{jj} ) ,
\eeq
and $S$ the shift operator, with $SS^\dag= S^\dag S=1$,
\beq
S:= \sum_j E_{j+1;j} = \sum_j  |j+1\rangle\langle j| .
\eeq
We wrote the matrix It\^o contraction similarly as done in conditioned free probability, to make the analogy apparent. 

We list a few properties of $\eta$ and $S$, which make this analogy even more manifest:\\
-- For $A$ a matrix, $\eta(S^\dag AS)=S^\dag  \eta(A)S$ and $\eta(SAS^\dag )=S\eta(A)S^\dag $.\\
-- For $A,B$ two matrices, $\mathrm{tr}(A\eta(B))=\mathrm{tr}(\eta(A)B)=\mathrm{tr}(\eta(A)\eta(B))$.\\
-- For $d_1,d_2$ diagonal matrices, $\eta(d_1Ad_2)=d_1\eta(A)d_2$.\\
-- For $d$ a diagonal matrix: $[d,S]= S (\delta^+d) = (\delta^-d) S$, with $(\delta^\pm d)$ diagonal and defined as $(\delta^+d)=\mathrm{diag}(d(j+1)-d(j))$ and $(\delta^-d)=\mathrm{diag}(d(j)-d(j-1))$. 
Similarly, $[S^\dag,d]= (\delta^+d) S^\dag = S^\dag (\delta^-d)$.\\
-- For $d$ a diagonal matrix: $[S^\dag,[S,d]]=[S,[S^\dag,d]]=- (\Delta_{\mathrm{dis}}d)$, with $\Delta_{\mathrm{dis}}:=\delta^+\delta^-=\delta^-\delta^+$ the discrete Laplacian.

\section{Proof of the conditioned free It\^o rules}
\label{app:free-ito-rules}

We here present hints for a proof of the conditioned free It\^o rules \eqref{eq:free-Ito-stoc}. It is a simple application of that of \cite{biane1998stochastic} in the unconditioned case.

Recall the defining formula \eqref{eq:def-int-stoc} for stochastic integrals, where the increments $(\delta X^a_{t_{k+1};t_k})_k$ are mutually free variables. We have to prove two relations:

(i) $\sum_{k=0}^{M-1} A_k(\delta X^a_{t_{k+1};t_k})B_k(\delta X^b_{t_{k+1};t_k})C_k\to 0$, if $\EE^\mathcal{D}[B_k]=0$;

(ii) $\sum_{k=0}^{M-1} A_k(\delta X^a_{t_{k+1};t_k})\Delta_k (\delta X^b_{t_{k+1};t_k})C_k \to \int_0^t\!\dd s\, A_s\,\dot g_s^{ab}(\Delta_s)\, C_s$ ,\\
for $A_k,B_k,C_k\in \mathcal{A}_{t_k}$, $\Delta_k\in\mathcal{D}$, so that they are free w.r.t $(\delta X^a_{t_{k+1};t_k})$. 

Following R. Speicher \cite{speicher1998combinatorial},  we declare that $Z_M\to 0$ if all the free cumulants vanish in the limit, $\lim_{M\to\infty}\kappa^\mathcal{D}_n[Z_M,\cdots,Z_M]=0$, for all $n\geq 1$. Hence $Z_M\to Z$ if $\lim_{M\to\infty}\kappa^\mathcal{D}_n[Z_M-Z,\cdots,Z_M-Z]=0$, for all $n\geq 1$. 

To prove (i), we set $\Sigma_k:=A_k(\delta X^a_{t_{k+1};t_k})B_k(\delta X^b_{t_{k+1};t_k})C_k$, with $\EE^\mathcal{D}[B_k]=0$, and write $\Sigma:=\sum_{k=0}^{M-1} \Sigma_k$. We aim at proving that all the free cumulants $\kappa^\mathcal{D}_n[\Sigma,\cdots,\Sigma]\to 0$ at large $M$. We compute the first cumulants, using the formula \eqref{eq:mixed-formula} for mixed expectation values and the freeness of $(\delta X^a_{t_{k+1};t_k})_k$ w.r.t. $A_k,B_k,C_k,\Delta_k$, we get:
\begin{align*}
\EE^\mathcal{D}[ \Sigma_k]= \EE^\mathcal{D}[  A_k(\delta X^a_{t_{k+1};t_k})B_k(\delta X^b_{t_{k+1};t_k})C_k
&]= \EE^\mathcal{D}[  A_k \kappa_2^D[(\delta X^a_{t_{k+1};t_k})\EE^\mathcal{D}[B_k](\delta X^b_{t_{k+1};t_k})] C_k] \\
&=\EE^\mathcal{D}[ A_k\, \dot g_{t_k}^{ab}(\EE^\mathcal{D}[B_k])\, C_k ]\, \delta t_k .
\end{align*}
Thus $\kappa^\mathcal{D}_1[\Sigma_k]=0$ for $\EE^\mathcal{D}[B_k]=0$. We then look at $\kappa^\mathcal{D}_2[\Sigma,\Sigma](\Delta)=\sum_{k,l}\EE^\mathcal{D}[\Sigma_k\Delta\Sigma_l]$. We have: 
\[
\EE^\mathcal{D}[\Sigma_k\Delta\Sigma_l] = \EE^\mathcal{D}[A_k(\delta X^a_{t_{k+1};t_k})B_k(\delta X^b_{t_{k+1};t_k})C_k\Delta A_l(\delta X^a_{t_{l+1};t_l})B_l(\delta X^b_{t_{l+1};t_l})C_l] .
\]
The later can be computed by expanding on free cumulants via a sum on non-crossing partition. For $k<l$ (similarly for $k>l$), since $ \delta X^a_{t_{k+1};t_k}$ is free w.r.t. all other variables, the contributing non-crossing partitions have to pair $\delta X^a_{t_{k+1};t_k}$ and $\delta X^b_{t_{k+1};t_k}$, and thus producing the singlet block $\{B_k\}$. Since $\EE^\mathcal{D}[B_k]=0$, the result vanishes. Only the terms with $k=l$ might non-trivially contribute. For the terms with $k=l$, the non-zero contributions arise from pairing $\delta X^b_{t_{k+1};t_k}$ and $\delta X^a_{t_{l+1};t_l}$ in the middle together and the two extremes $\delta X^a_{t_{k+1};t_k}$ and $\delta X^b_{t_{l+1};t_l}$ together. This leads to a sum of the form $\sum_{k=0}^{M-1} \mathcal{E}_k (\delta t_k)^2$ (with $\mathcal{E}_k$ bounded, by hypothesis) which vanish in the large $M$ limit since $(\delta t_k)^2\sim 1/M^2$. Hence $\kappa^\mathcal{D}_2[\Sigma,\Sigma]=0$. Similar arguments apply for higher cumulants and $\kappa^\mathcal{D}_n[\Sigma,\cdots,\Sigma]=0$, for all $n\geq 1$.

To prove (ii), we first look at the expectation value. 
Using the formula \eqref{eq:mixed-formula} for mixed expectation values as above, we have: 
\[
\EE^\mathcal{D} [  A_k(\delta X^a_{t_{k+1};t_k})\Delta_k(\delta X^b_{t_{k+1};t_k})C_k] 
=\EE^\mathcal{D}[ A_k\, \dot g_{t_k}^{ab}(\Delta_k)\, C_k ]\, \delta t_k .
\]
By summation, the sum over $k$ converges to the integral $\int_0^t\!\dd s\, \EE^\mathcal{D}[ A_s\, \dot g_{s}^{ab}(\Delta_s)\, C_s ]$. Let now $\tilde \Sigma_k:=\hat\Sigma_k-\bar\Sigma_k$, with $\hat\Sigma_k:=A_k(\delta X^a_{t_{k+1};t_k})\Delta_k(\delta X^b_{t_{k+1};t_k})C_k$, and $\bar \Sigma_k:=A_k \dot g^{ab}(\Delta_k)C_k(\delta t_k)$, and write $\tilde\Sigma:=\sum_{k=0}^{M-1} \tilde \Sigma_k$. We aim at proving that $\kappa^\mathcal{D}_n[\tilde\Sigma,\cdots,\tilde\Sigma]=0$, for $n\geq 1$. By construction $\kappa^\mathcal{D}_1[\tilde \Sigma]=0$. For $n=2$, the second cumulants $\kappa^\mathcal{D}_2[\tilde\Sigma,\tilde\Sigma]$ is a sum over $k,l$. We have, for $k>l$,s
\[
\kappa^\mathcal{D}_2[\tilde\Sigma_k,\tilde\Sigma_l](\Delta) =
\EE^\mathcal{D}[\tilde\Sigma_k\Delta\tilde \Sigma_l] -\EE^\mathcal{D}[\tilde\Sigma_k\Delta\bar \Sigma_l] 
- \EE^\mathcal{D}[\bar\Sigma_k\Delta\tilde \Sigma_l] +\EE^\mathcal{D}[\bar\Sigma_k\Delta\bar \Sigma_l]  .
\]
Since the $(\delta X^a_{t_{k+1};t_k})$ is free w.r.t. the other variables, we have, $\EE^\mathcal{D}[\tilde\Sigma_k\Delta\tilde \Sigma_l] =\EE^\mathcal{D}[\bar\Sigma_k\Delta\tilde \Sigma_l]$ and $\EE^\mathcal{D}[\tilde\Sigma_k\Delta\bar \Sigma_l] =\EE^\mathcal{D}[\bar\Sigma_k\Delta\bar \Sigma_l]$, for $k>l$. Hence, $\kappa_2[\tilde\Sigma_k,\tilde\Sigma_l]=0$, for $k>l$. Similarly $\kappa^\mathcal{D}_2[\tilde\Sigma_k,\tilde\Sigma_l]=0$ for $k<l$. For $k=l$, the cumulants can be computed by pairing the $(\delta X^{\cdot}_{t_{k+1};t_k})$ together,  they are proportional to $(\delta t_k)^2$, and thus their sum vanish in the limit $M\to\infty$. That is $\kappa^\mathcal{D}_2[\tilde\Sigma,\tilde\Sigma]=0$. Similar arguments for higher cumulants.

Going back to the It\^o formula. $F^a_t$ and $G^b_t$ are given by limits of discrete sums: $F^a_t=\sum_{k=0}^{N-1} A_{t_k} \delta X^a_{t_{k+1};t_k} B_{t_k}$ and similarly for $G^b_t$. The product $F^a_tG^b_t$ is given by a double sum, over two indices $k,\,l$. We can split this sum according whether $k<l$, $k=l$ or $k>l$,
\[
F^a_tG^b_t = \left({ \sum_{k=0}^{M-1}\sum_{l=0}^{k-1} + \sum_{\overset{l,k=0}{l=k}}^{M-1} + \sum_{l=0}^{M-1}\sum_{k=0}^{l-1} }\right) (A_{t_k}\delta X^a_{t_{k+1};t_k} B_{t_k})(C_{t_l}\delta X^b_{t_{l+1};t_l} D_{t_l}) .
\]
The first and last sums converge to the following stochastic integrals:
\begin{align*}
 & \sum_{k=0}^{M-1} (A_{t_k}\delta X^a_{t_{k+1};t_k}B_{t_k})  (\sum_{l=0}^{k-1} C_{t_l}\delta X^a_{t_{l+1};t_l} D_{t_l}) \to \int_0^t (\dd F_s) G_s ,\\
  & \sum_{l=0}^{M-1}  (\sum_{k=0}^{l-1} A_{t_k}\delta X^a_{t_{k+1};t_k} B_{t_k}) (C_{t_l}\delta X^a_{t_{l+1};t_l} D_{t_l}) \to \int_0^t  F_s(\dd G_s)  .
\end{align*}
The middle one converges to the following integral,
\[
\sum_{k=0}^{M-1} (A_{t_k}\delta X^a_{t_{k+1};t_k} B_{t_k})(C_{t_k}\delta X^b_{t_{k+1};t_k} D_{t_k}) \to \int_0^t\!\dd s\, A_s\, \dot g_{s}^{ab}(\EE^\mathcal{D}[B_sC_s])\, D_s .
\]
This yields the It\^o formula \eqref{eq:free-Ito-stoc}. A more complete proof, filled the missing steps of the above presentation, would follow that in \cite{biane1998stochastic} in the unconditioned case.

\section{Proof of the evolution equations for moments}
\label{app:proof-eq-of-motion} 

Here is the proof of Proposition \ref{prop:eq-moments-general} and of the equations of motion \eqref{eq:C-general}.

Let $C_{p+1}^{t}[\Delta_1,\cdots,\Delta_{p}] :=\EE^\mathcal{D}[ \phi_t \Delta_1 \phi_t\cdots \Delta_{p}\phi_t ]$ be the moments.
Using $\phi_{t+\dd t} = e^{i\dd X_t}\phi_t e^{-i\dd X_t}$, we can write
\[
C_{p+1}^{t+\dd t}[\Delta_1,\cdots,\Delta_{p}] 
=\EE^\mathcal{D}[ e^{i\dd X_t} \phi_t e^{-i\dd X_t}\Delta_1 e^{i\dd X_t}\phi_t\cdots e^{-i\dd X_t}\Delta_{p}e^{i\dd X_t}\phi_t e^{-i\dd X_t}]
\]
Up to a global conjugation by $e^{\pm i\dd X_t}$, we can view this evolution equation as transforming each $\Delta_k$ into $e^{-i\dd X_t}\Delta_k e^{i\dd X_t}$. We have $e^{-i\dd X_t}\Delta e^{i\dd X_t}= \Delta - i[\dd X_t,\Delta] - \frac{1}{2}[\dd X_t,[\dd X_t,\Delta]]$.
To compute $C_{p+1}^{t+\dd t}$, we have to expand it to second order. We then have to check four relations, using the free It\^o rules \eqref{eq:Ito-sigma}:
\begin{itemize}
\item[(i)] For $[\dd X_t,[\dd X_t, \Delta]]$, with $\Delta\in\mathcal{D}$, we have:
\[
[\dd X_t,[\dd X_t, \Delta]]= - 2 L_\epsilon(\Delta)\, \dd t .
\]
We shall apply this relation for each of the $\Delta_k$'s.
\item[(ii)] For $[\dd X_t,[\dd X_t, A_t]]$, with $A_t\in\mathcal{A}_t$, we have:
\[
[\dd X_t,[\dd X_t, A_t]]= (\sigma_\epsilon(1) A_t + A_t \sigma_\epsilon(1))\dd t - 2 \sigma_\epsilon(\EE^\mathcal{D}[A_t])\, \dd t .
\]
Taking the expectation value, we get
\[
\EE^\mathcal{D}[ [\dd X_t,[\dd X_t, A_t]] ] = - 2  L_\epsilon( \EE^\mathcal{D}[A_t])\, \dd t
\] 
We shall apply this relation for $A_t= \phi_t \Delta_1 \phi_t\cdots \Delta_{p}\phi_t$.
\item[(iii)] For $A_t [\dd X_t, \Delta_1]B_t [\dd X_t, \Delta_2] C_t$, with $A_t, B_t,C_t\in\mathcal{A}_t$, and $\Delta_1,\Delta_2\in\mathcal{D}$, we have: 
\begin{align*}
A_t [\dd X_t, \Delta_1]B_t [\dd X_t, \Delta_2] C_t 
&= A_t\left({ + \sigma_\epsilon(\EE^\mathcal{D}[\Delta_1B_t]) \Delta_2 -  \sigma_\epsilon(\EE^\mathcal{D}[\Delta_1B_t \Delta_2]) }\right.\\
&~~~~~~~~   -  \left.{ \Delta_1 \sigma_\epsilon(\EE^\mathcal{D}[B_t]) \Delta_2  + \Delta_1\sigma_\epsilon(\EE^\mathcal{D}[B_t \Delta_2])  }\right)C_t\,\dd t .
\end{align*}
Recall that $\EE^\mathcal{D}[\Delta_1B_t \Delta_2]=\Delta_1 \EE^\mathcal{D}[B_t ]\Delta_2$.
We replace $\sigma_\epsilon$ for $L_\epsilon$ using $\sigma_\epsilon(\Delta) = L_\epsilon(\Delta)  +\frac{1}{2}(\sigma_\epsilon(1)\Delta+\Delta\sigma_\epsilon(1))$. All the terms with $\sigma_\epsilon(1)$ compensate, so that we are left with:
\begin{align*}
A_t [\dd X_t, \Delta_1]B_t [\dd X_t, \Delta_2] C_t 
&= A_t\left({ + L_\epsilon(\EE^\mathcal{D}[\Delta_1B_t]) \Delta_2 -  L_\epsilon(\EE^\mathcal{D}[\Delta_1B_t \Delta_2]) }\right.\\
&~~~~~~~~   -  \left.{ \Delta_1 L_\epsilon(\EE^\mathcal{D}[B_t]) \Delta_2  + \Delta_1L_\epsilon(\EE^\mathcal{D}[B_t \Delta_2])  }\right)C_t\,\dd t 
\end{align*}
We recognize (minus twice) the operator $\mathfrak{D}_\epsilon$, defined in \eqref{eq:Diff-dressed}. This becomes:
\[
A_t [\dd X_t, \Delta_1]B_t [\dd X_t, \Delta_2] C_t = - 2 A_t\, \mathfrak{D}_\epsilon(\Delta_1,\Delta_2; \EE^\mathcal{D}[B_t]) \,C_t\,\dd t .
\]
Taking the expectation value, we get:
\[
 \EE^\mathcal{D}\left[{ A_t [\dd X_t, \Delta_1]B_t [\dd X_t, \Delta_2] C_t  }\right] 
 = - 2  \EE^\mathcal{D}\left[{ A_t\, \mathfrak{D}_\epsilon(\Delta_1,\Delta_2; \EE^\mathcal{D}[B_t]) \,C_t }\right] \dd t .
\]
We shall apply this relation for $A_t=\phi_t\Delta_1\phi_t\cdots\Delta_{j-1}\phi_t$, and $B_t=\phi_t\Delta_{j+1}\phi_t\cdots\Delta_{k-1}\phi_t$, and $C_t = \phi_t\Delta_{k+1}\phi_t\cdots\Delta_{p}\phi_t$, and $\Delta_1,\Delta_2$ replaced by $\Delta_j,\Delta_k$, for each pair $j<k$.
\item[(iv)] For $[\dd X_t, A_t [\dd X_t, \Delta] C_t]$, with $A_t, C_t\in\mathcal{A}_t$ and $\Delta\in\mathcal{D}$, we have:
\begin{align*}
[\dd X_t, A_t [\dd X_t, \Delta] C_t]
= & \left({ + \sigma_\epsilon(\EE^\mathcal{D}[A_t])\Delta C_t  - \sigma_\epsilon(\EE^\mathcal{D}[A_t\Delta]) C_t }\right.\\
& ~~\left.{ - A_t \sigma_\epsilon(\EE^\mathcal{D}[\Delta C_t])  + A_t \Delta \sigma_\epsilon(\EE^\mathcal{D}[C_t]) }\right)\dd t .
\end{align*}
Replacing again $\sigma_\epsilon$ for $L_\epsilon$, we get:
\begin{align*}
[\dd X_t, A_t [\dd X_t, \Delta] C_t]
= & \left({ + L_\epsilon(\EE^\mathcal{D}[A_t])\Delta C_t  - L_\epsilon(\EE^\mathcal{D}[A_t\Delta]) C_t }\right.\\
& ~~\left.{ - A_t L_\epsilon(\EE^\mathcal{D}[\Delta C_t])  + A_t \Delta L_\epsilon(\EE^\mathcal{D}[C_t]) }\right)\dd t \\
& ~~ +\frac{1}{2} \left({ \EE^\mathcal{D}[A_t])[\sigma_\epsilon(1),\Delta] C_t - A_t [\sigma_\epsilon(1),\Delta]\EE^\mathcal{D}[C_t] }\right)\dd t .
\end{align*}
Taking the expectation value the two last terms cancel and we get:
\begin{align*}
\EE^\mathcal{D}\left[{ [\dd X_t, A_t [\dd X_t, \Delta] C_t]}\right]
= & \left({ + L_\epsilon(\EE^\mathcal{D}[A_t]) \Delta \EE^\mathcal{D}[C_t]  - L_\epsilon(\EE^\mathcal{D}[A_t\Delta]) \EE^\mathcal{D}[C_t] }\right.\\
& ~~\left.{ - \EE^\mathcal{D}[A_t] L_\epsilon(\EE^\mathcal{D}[\Delta C_t])  + \EE^\mathcal{D}[A_t] \Delta L_\epsilon(\EE^\mathcal{D}[C_t]) }\right)\dd t 
\end{align*}
We recognize (minus twice) the operator $\mathfrak{L}_\epsilon$, defined in \eqref{eq:Ldressed}. This becomes:
\[
\EE^\mathcal{D}\left[{ [\dd X_t, A_t [\dd X_t, \Delta] C_t]}\right] =
-2 \mathfrak{L}_\epsilon(\Delta\, ;\,\EE^\mathcal{D}[A_t],\EE^\mathcal{D}[C_t])\, \dd t
\]
We shall apply this relation for $A_t=\phi_t\Delta_1\phi_t\cdots\Delta_{j-1}\phi_t$, and $C_t = \phi_t\Delta_{j+1}\phi_t\cdots\Delta_{p}\phi_t$ and $\Delta=\Delta_j$, for $j=1,\cdots, p$.
\end{itemize}
Grouping all terms yields  \eqref{eq:C-general}.

\section{Proof of the equivalence with QSSEP in its scaling limit}
\label{app:proof-equivalence}

We here derive the moment evolution equations in the large $N$ scaling limit of discrete QSSEP to prove Proposition \ref{prop:C-motion-discrete-qssep}. The proof consists in using It\^o calculus and the factorization property of random matrix expectation values at large $N$.
We use the notation of Appendix \ref{app:discrete-qssep}.

Let $d_1,\cdots, d_p$, $p\geq 0$, be diagonal matrices and define the diagonal matrix $\eta(G_sd_1G_s\cdots d_{p}G_s)$. In the large $N$ scaling limit, with $s=N^2t$, $i=[Nx]$, at $x\in[0,1]$ and $t>0$ fixed, this becomes a function on $[0,1]$,
\beq
\hat Q^t_{p+1}[d_1,\cdots,d_{p}](x):= \lim_{N\to\infty} \langle j=[xN]| G_sd_1G_s\cdots d_{p}G_s|j=[xN]\rangle,\quad s=N^2t,
\eeq
which we write as $\langle x| G_sd_1G_s\cdots d_{p}G_s|x\rangle$ to simplify the notation.
 
Note that we didn't yet took the expectation value. We let $\hat C^t_{p+1}:=\EE[\hat Q^t_{p+1}]$.

The time evolution of $\hat Q^t_{p+1}$ follows from those of $G_s$, see \eqref{eq:G-app}.
We here only deal with the unitary contribution in \eqref{eq:G-app}, since the boundary terms only fix the boundary conditions in the scaling limit (see Appendix \ref{app:discrete-bdry}). Writing $G_{s+\dd s}= e^{i\dd h_s}G_s e^{-i\dd h_s}$, we have:
\[
\hat Q^{t+dt}_{p+1}[d_1,\cdots,d_p](x)= \langle x| e^{i\dd h_s} G_sd_1(\dd s)G_s\cdots d_{p}(\dd s) G_se^{-i\dd h_s}|x\rangle,
\]
where we have transfered the time evolution of $G_s$ to an action on the diagonal matrices: $d_j(\dd s):= e^{i\dd h_s}d_je^{-i\dd h_s}=d_j +i [\dd h_s,d_j] - \frac{1}{2}[\dd h_s,[\dd h_s,d_j]]$. 
\medskip

 Expanding using the (classical) It\^o rules \eqref{eq:ZIto-rules}, we get different contributions: 
\begin{itemize}
\item[$\to$] There are noisy terms, linear in $\dd h_s$, which vanish in average, thanks to the It\^o convention.
\item[$\to$] There are It\^o contractions coming from the Hamiltonian increments at the extremal positions. This leads to:
\begin{align}
&-  \frac{1}{2} \langle x| [\dd h_s,[\dd h_s,G_sd_1G_s\cdots d_{p}G]]|x\rangle 
\nonumber\\
&= -   \frac{1}{2}\langle x|  [S,[S^\dag, \eta(G_sd_1G_s\cdots d_{p}G_s )]]|x\rangle -  \frac{1}{2} \langle x| [S^\dag,[S,\eta(G_sd_1G_s\cdots d_{p}G_s)]]|x\rangle,
\nonumber
\end{align}
which is diagonal and can be expressed in terms of the product of $G_s$'s. In the continuum, we get (absorbing a factor $N^{-2}$ in the time rescaling):
\beq
 \partial_x^2 \langle x| G_sd_1G_s\cdots d_{p}G_s|x\rangle.
\eeq
\item[$\to$] There are It\^o contractions coming from the Hamiltonian increments at a bulk position. This leads to the sum over $j$ of
\begin{align*}
& -  \frac{1}{2} \langle x| G_sd_1G_s\cdots d_{j-1}G_s)[\dd h_s,[\dd h_s,d_j]](G_sd_{j+1}G_s\cdots d_{p}G_s)|x\rangle\\
=& -\frac{1}{2} \langle x| G_sd_1G_s\cdots d_{j-1}G_s)[S,[S^\dag,d_j]](G_sd_{j+1}G_s\cdots d_{p}G_s)|x\rangle \\
&-\frac{1}{2} \langle x| G_ts_1G_s\cdots d_{j-1}G_s)[S^\dag,[S,d_j]](G_sd_{j+1}G_s\cdots d_{p}G_s)|x\rangle,
\end{align*}
whose scaling limit is  (absorbing a factor $N^{-2}$):
\beq
\langle x| (G_sd_1G_s\cdots d_{j-1}G_s)(\partial^2d_j)(G_sd_{j+1}G_s\cdots d_{p}G_s)|x\rangle.
\eeq
\item[$\to$] There are It\^o contractions between one extremal positions and a bulk one. These terms are the sum over $j$ of 
\begin{align*}
& \langle x| [\dd h_s,(G_sd_1G_s\cdots d_{j-1}G_s)[\dd h_s,d_j](G_sd_{j+1}G_s\cdots d_{p}G_s)|x\rangle\\
= & ~~ \langle x| S\eta(G_sd_1G_s\cdots d_{j-1}G_t)[S^\dag,d_j](G_sd_{j+1}G_s\cdots d_{p}G_s)|x\rangle \\
 &- \langle x| (G_sd_1G_s\cdots d_{j-1}G_s)[S^\dag,d_j]\eta(G_sd_{j+1}G_s\cdots d_{p}G_s)S|x\rangle \\
&+ [\cdots S\leftrightarrow S^\dag \cdots ] .
\end{align*}
In the continuum, this yields (absorbing a factor $N^{-2}$):
\beq
  -2 \partial_x\left({ \langle x| G_sd_1G_s\cdots d_{j-1}G_s)|x\rangle (\partial d_j(x))  \langle x|(G_sd_{j+1}G_s\cdots d_{p}G_s)|x\rangle }\right).
\eeq
\item[$\to$] There are It\^o contractions between two bulk positions. These are the sum over $j<k$ of 
\begin{align*}
&- \langle x|(G_td_1G_s\cdots d_{j-1}G_s)[\dd h_s,d_j] \eta(G_sd_{j+1}G_s\cdots d_{k-1}G_s)[\dd h_s,d_k](G_sd_{k+1}G_s\cdots d_{p}G_s)|x\rangle\\
= &- \langle x|(G_td_1G_s\cdots d_{j-1}G_s)[S,d_j] \eta(G_sd_{j+1}G_s\cdots d_{k-1}G_s)[S^\dag,d_k](G_sd_{k+1}G_s\cdots d_{p}G_s)|x\rangle\\
&- \langle x|(G_td_1G_s\cdots d_{j-1}G_s)[S^\dag,d_j] \eta(G_sd_{j+1}G_s\cdots d_{k-1}G_s)[S,d_k](G_sd_{k+1}G_s\cdots d_{p}G_s)|x\rangle.
\end{align*}
The large $N$ scaling limit is (absorbing a factor $N^{-2}$):
\beq
 2\langle x|(G_sd_1G_S\cdots d_{j-1}G_s)(\partial d_j) \eta(G_sd_{j+1}G_s\cdots d_{k-1}G_t)(\partial d_k)(G_sd_{k+1}G_s\cdots d_{p}G_t)|x\rangle.
\eeq
\end{itemize}

Gathering all terms we get:
\begin{align*}  
\partial_t \hat Q^t_{p+1} [d_1,\cdots,d_{p}] = &~\mathrm{noisy~ terms}
 + \partial^2 \hat Q^t_{p+1}[d_1,\cdots,d_{p}] + \sum_{j=1}^{p} \hat Q^t_{p+1}[d_1,\cdots,\partial^2d_j,\cdots,d_{p}] \\
&- 2\sum_{j=1}^{p} \partial \left({ \hat Q^t_{j}[d_1,\cdots,d_{j-1}]\,  (\partial d_j)\,\hat Q^t_{p+1-j}[d_{j+1},\cdots,d_{p}] }\right)\\
&\hskip -2.0 truecm +2\sum_{j<k} \hat Q^t_{p+1-k+j}[d_1,\cdots,d_{j-1}, 
{\underbrace{(\partial d_j)\hat Q^t_{k-j}[d_{j+1},\cdots,d_{k-1}](\partial d_k)}}
,d_{k+1},\cdots, d_{p}]
\end{align*}

These equations involve a nested non-linear structure, in the second line and in the third line marked with the under-brace. 
In the large $N$ limit we expect factorization of expectation values at leading order in $1/N$, as usual in random matrix theory. This factorization property was implicitly used in \cite{bernard2019open,bernard2021dynamics,hruza2022coherent}. Thus, taking expectation values and using factorization, we get, for $\hat C^t_{p+1}:=\EE[\hat Q^t_{p+1}]$,
\begin{align} \label{eq:dGn-big-mean}
\partial_t \hat C^t_{p+1}[d_1,\cdots,d_{p}] =& ~~ \partial^2 \hat C^t_{p+1}[d_1,\cdots,d_{p}]
 + \sum_{j=1}^{p} \hat C^t_{p+1}[d_1,\cdots,\partial^2d_j,\cdots,d_{p}]  \\
&\hskip -0.5 truecm - 2\sum_{j=1}^{p} \partial\left({ \hat C^t_{j}[d_1,\cdots,d_{j-1}]\, (\partial d_j)\,\hat C^t_{p+1-j}[d_{j+1},\cdots,d_{p}] }\right) \nonumber \\
&\hskip -2.5 truecm +2\sum_{j<k} \hat C^t_{p+1-k+j}[d_1,\cdots,d_{j-1}, 
\underbrace{(\partial d_j)\hat C^t_{k-j}[d_{j+1},\cdots,d_{k-1}](\partial d_k)}
,d_{k+1},\cdots, d_{p}] \nonumber
\end{align}
Recall that $\hat C_{p+1}^t$ are linear in the entries $d_k$ so that the middle nested terms can be written as a (integrated) product of two $\hat C_k^t$'s. These relations should be compared with those satisfied by the free process \eqref{prop:C-eqs-motion}. This proves Proposition \ref{prop:C-motion-discrete-qssep}. 

Let us write them for small $p=0,1,2,\cdots$:\\
-- $p=0$, $\hat C_1^t(x):=\EE[ \langle x| G_s |x\rangle]$, $s=N^2t$, and the equation \eqref{eq:dGn-big-mean} is the heat equation:
\[
\partial_t \hat C_1^t(x) =\partial_x^2 \hat C_1^t(x).
\]
-- $p=1$, $\hat C_2^t[d_1](x)=\EE[ \langle x| G_s d_1 G_s |x\rangle]$, $s=N^2t$, and the equation \eqref{eq:dGn-big-mean} reads:
\[
\partial_t \hat C_2^t[d_1](x) =\partial_x^2 \hat C_2^t[d_1](x) + \hat C_2^t[\partial^2 d_1](x) 
- 2 \partial_x\!\left({ \hat C_1^t(x) (\partial_x d_1)(x)  \hat C_1^t(x) }\right).
\]
-- $p=2$, $\hat C_3^t[d_1,d_2](x)=\EE[ \langle x| G_s d_1 G_s d_2 G_s |x\rangle]$, $s=N^2t$, and the equation \eqref{eq:dGn-big-mean} reads: 
\begin{align*}
\partial_t \hat C_3^t[d_1,d_2](x) =& ~~ \partial_x^2 \hat C_3^t[d_1,d_2](x) + \hat C_3^t[\partial^2 d_1,d_2](x) + \hat C_3^t[d_1,\partial^2 d_2](x) \\
&\hskip -3.0 truecm - 2 \partial_x\!\left({ \hat C_1^t(x) (\partial_x d_1)(x) \hat C_2^t[d_2](x)  }\right) - 2 \partial_x\!\left({ \hat C_2^t[d_1](x) (\partial_x d_2)(x)  \hat C_1^t(x)  }\right) + 2 \hat C_2^t[(\partial d_1)\hat C_1^t (\partial d_2)](x)
\end{align*}


\medskip

\section{Boundary conditions in discrete QSSEP}
\label{app:discrete-bdry}

We here discuss the boundary conditions in the three different cases: "periodic", "closed" or "open".
We first look at the evolution of the mean local densities $n_j:= \EE[ G_{jj}]$. It reads (with $n$ the vector $n:=(n_j)$):
\[
\partial_s  n = A\cdot  n + b ,
\]
where both the matrix $A$ and the linear drift $b$ depend whether we are looking at the periodic, closed or open QSSEP.

(i) "Periodic": The evolution is unitary with Hamiltonian increments associated with the loop algebra $su(N)$. There, $b=0$ and $A$ is the discrete Laplacian on the periodic circle:
\[
A= \begin{pmatrix}  
-2 & 1 & 0 & \cdots & 1 \\ 1 & -2 & 1 & 0 & \cdots  \\  \vdots & ~ & \ddots  & ~ & \vdots \\  \cdots & 0 & 1 & -2 & 1 \\ 1 & \cdots & 0 & 1 & -2 
\end{pmatrix}.
\]
 It is clear that in the continuous limit the boundary conditions are periodic on $[0,1]$.
 \medskip

(ii) "Closed":  The evolution is unitary with Hamiltonian increments involving the simple roots of $su(N)$. There, $b=0$ and $A$ is the discrete Laplacian on the interval $[0,1]$:
\[
A= \begin{pmatrix}  
-1 & 1 & 0 & \cdots & \cdots \\ 1 & -2 & 1 & 0 & \cdots \\  \vdots & ~ & \ddots & ~ & \vdots \\ \cdots & 0 & 1 & -2 & 1 \\ \cdots & \cdots & 0 & 1 & -1 
\end{pmatrix}.
\]
The equations for the densities are $\partial_s n_1=  n_2- n_1$, and  $\partial_s n_j=  n_{j+1}-2 n_j +  n_{j-1}$ for $j=2,\cdots, N-1$, and $\partial_s n_N=  n_{N-1}- n_N$. After rescaling time diffusively $s\leadsto t=N^{-2}s$ and going to the scaling limit $ n_j=\bar n(x=j/n)$ for some smooth function $\bar n(\cdot)$, these equations yield:
\begin{align*}
\partial_t \bar n(0) &= N (\partial_x \bar n)(0) + \cdots ,\\
\partial_t \bar n(x) &= (\partial_x^2 \bar n)(x) + \cdots,\\
\partial_t \bar n(1) &=  -N (\partial_x \bar n)(1) + \cdots .
\end{align*}
The existence of a scaling limit implies the Neumann boundary conditions: $(\partial_x \bar n)(0) =0$ and $(\partial_x \bar n)(1)=0$.
\medskip

(ii) "Open": The discrete evolution is not unitary due to the boundary terms. In the open case, the evolution equation for the density is $\partial_s  n = A\cdot  n + b$ with $b=(\alpha_1,0,\cdots,0,\alpha)^T$, with $n_a=\alpha_1/\nu_1$ and $n_b=\alpha_N/\nu_N$, and
\[
A= \begin{pmatrix}  
-1-\nu_1 & 1 & 0 & \cdots & \cdots \\ 1 & -2 & 1 & 0 & \cdots \\  ~ & ~ & \cdots & ~ & ~ \\ \cdots & 0 & 1 & -2 & 1 \\ \cdots & \cdots & 0 & 1 & -1 -\nu_N
\end{pmatrix}.
\]
The extra drift terms are echos of the boundary Lindbladians. 
After rescaling time diffusively $s\leadsto t=N^2s$, the equations for the densities read:
\begin{align*}
\partial_t n_1 &=  N^2(n_2- n_1)+N^2\nu_1(n_a-n_1), \\
\partial_t n_j &=  N^2(n_{j+1}-2 n_j +  n_{j-1}),\quad \mathrm{ for}\ j=2,\cdots, N-1,\\
\partial_t n_N &=  N^2(n_{N-1}- n_N)+N^2\nu_N(n_b-n_N), 
\end{align*}
In the scaling limit, with $n_j=\bar n(j/N) + (1/N)\tilde n(j/N)+ O(1/N^2)$:
\begin{align*}
\partial_t \bar n(0) &= N^2\nu_1(n_a-\bar n(0)) - N(\nu_1\tilde n(0) -(1-\nu_1) (\partial_x \bar n)(0)) + O(1), \\
\partial_t \bar n(x) &= (\partial_x^2 \bar n)(x) + O(1/N),\\
\partial_t \bar n(1) &=  N^2\nu_N(n_b-\bar n(1)) - N(\nu_N \tilde n(1)+ (\partial_x \bar n)(1)) + O(1),
\end{align*}
If there is a scaling limit, these equations enforces the Dirichlet boundary conditions: $\bar n(0)=n_a$ and $\bar n(1)=n_b$. The sub-leading terms fix $\tilde n$ at the boundary.

More generally, let us look at the SDE satisfied for $G_{ij}$ which, again, are affine, not linear, in the open case:
\[
d G_{ij}= d G_{ij}\big\vert_\mathrm{unitary} + \nu_1\left({ n_a\delta_{1i}\delta_{1j} - \frac{G_{ij}}{2}(\delta_{1i}+\delta_{1j})}\right)\dd s + (\nu_1\leadsto \nu_N) .
\]
After rescaling time diffusively $s\leadsto t=N^{-2}s$, and specifying them to $i=1$ (and writing only the terms associated to the boundary at $x=0$):
\begin{align*}
d G_{11}&= d G_{11}\big\vert_\mathrm{unitary} + N^2\nu_1( n_a - G_{11})dt, \\
d G_{1j}&= d G_{1j}\big\vert_\mathrm{unitary} - N^2\frac{\nu_1}{2} G_{1j}dt
\end{align*}
Here the unitary dynamics is as in the "closed" case.
This implies the boundary conditions $G_{11}\approx n_a$ and $G_{1j}\approx 0$ in the scaling limit. That is: in the scaling, if one of the point is at the boundary but not the other one, then the coherences vanish (no fluctuation).

Notice that the scaling limit of the open QSSEP is $N\to\infty$, fixing the rates $\nu_\alpha\not= 0$ (which are then irrelevant in the limit), while the scaling limit of the closed QSSEP corresponds to a different order of limits: first setting $\nu_\alpha= 0$ and then sending $N\to\infty$. 
\vfill \eject 

\bibliography{QSSEP_Continuum}{}
\bibliographystyle{plain}


\end{document}